\title{On verifying expectations and observations of intelligent agents}
\author{
Sourav Chakraborty \\
Indian Statistical Institute, Kolkata, India\\
sourav@isical.ac.in\\
\And
Avijeet Ghosh\\
Indian Statistical Institute, Kolkata, India\\
avijeet\_r@isical.ac.in\\
\And
Sujata Ghosh \\
Indian Statistical Institute, Chennai, India\\
sujata@isichennai.res.in\\
\And
Fran{\c{c}}ois Schwarzentruber\\
ENS Rennes\\
francois.schwarzentruber@ens-rennes.fr\\
}
\tikzstyle{circNode} = [circle, text centered, draw=black, minimum size= 2em]
\newcommand{\LL}{\mathcal{L}} 
\newcommand{\BP}{\ensuremath{\mathbf{P}}}
\newcommand{\Act}{\ensuremath{\mathbf{\Sigma}}}
\newcommand{\Ag}{\ensuremath{\mathbf{I}}}
\newcommand{\M}{\mathcal{M}}
\newcommand{\tr}{\mathsf{tr}}
\newcommand{\regdiv}[1]{\ensuremath{\backslash} #1}
\newcommand{\ep}{\ensuremath{\varepsilon}}
\newcommand{\dl}{\ensuremath{\delta}}
\newcommand\ldiaarg[1]{\langle#1\rangle}
\tikzstyle{world} = [draw]
\renewcommand{\phi}{\varphi}
\newcommand{\Exp}{\it Exp} 
\newcommand{\imp}{\rightarrow}
\newcommand{\POL}{\mathsf{POL}}
\newcommand{\EL}{\mathsf{EL}}
\newcommand{\DEL}{\mathsf{DEL}}
\newcommand{\PDL}{\mbox{\rm PDL}}
\newcommand{\EPL}{\mbox{\rm EPL}}
\newcommand{\PAL}{\mathsf{PAL}}
\newcommand{\APAL}{\mathsf{APAL}}
\newcommand{\union}{\cup}
\newcommand\algoaccept{\textbf{accept}}
\newcommand\algoreject{\textbf{reject}}
\newcommand{\ldiamondarg}[1]{\langle#1 \rangle}
\newcommand{\mcNP}{\mathsf{mcSF\&EXIT}}
\newcommand{\mcWORDS}{\mathsf{mcWords}}
\newcommand{\DecidePSPACE}{\mathsf{mcPOL}}
\newcommand{\ResidueByLetter}{\mathsf{ResidueByLetter}}
\newcommand{\AuxOut}{\mathsf{AuxOut}}
\newcommand{\GetSetNP}{\mathsf{WorldsNP}}
\newcommand{\GetSet}{\mathsf{WorldsP}}
\newcommand{\T}{\mathsf{True}}
\newcommand{\Fa}{\mathsf{False}}
\newcommand{\starfree}{\mathsf{Star\mbox{-}Free}}
\newcommand{\word}{\mathsf{Word}}
\newcommand{\existential}{\mathsf{Existential}}
\newcommand{\PSPACE}{\mathsf{PSPACE}}
\newcommand{\PTime}{\mathsf{P}}
\newcommand{\NP}{\mathsf{NP}}
\newcommand{\PTIME}{\mathsf{PTIME}}
\newcommand{\automaton}{\mathcal A}
\newcommand{\modelM}{\mathcal M}
\newcommand{\set}[1]{\{#1\}}
\newcommand{\suchthat}{\mid}
\newcommand{\obsright}{\blacktriangleright}
\newcommand{\obsleft}{\blacktriangleleft}
\newcommand{\obsup}{\blacktriangle}
\newcommand{\obsdown}{\blacktriangledown}
\newcommand{\expwater}{(\obsright \union \obsup)^* (\obsdown \union \obsleft \union \ep) (\obsright \union \obsup)^*}
\newcommand{\exppower}{(\obsleft \union \obsdown)^* (\obsup \union \obsright \union \ep) (\obsleft \union \obsdown)^*}
\newcommand{\exppatrol}{(\obsright^+ \obsdown^+ \obsleft^+ \obsup^+)^*}
\newcommand{\prefixes}{\mathit{prefixes}}
\newtheorem{theorem}{Theorem}
\newtheorem{lemma}[theorem]{Lemma}
\newtheorem{definition}{Definition}
\newtheorem{example}{Example}
\newcommand{\emptyword}{\epsilon}
\begin{document}

\maketitle

\begin{abstract}
Public observation logic ($\POL$) is a variant of dynamic epistemic  logic  to  reason  about  agent  expectations  and  agent observations. Agents have certain expectations, regarding the situation at hand, that are actuated by the relevant protocols, and they eliminate possible worlds in which  their expectations do not  match with  their  observations.  In this work, we  investigate  the  computational complexity of the model checking problem for $\POL$ and  prove its $\PSPACE$-completeness. We also study various syntactic fragments of $\POL$. We exemplify the applicability of $\POL$ model checking in verifying different characteristics and features of an interactive system with respect to the distinct expectations and (matching) observations of the system. Finally, we provide a discussion on the implementation of the model checking algorithms.
\end{abstract}

\section{Introduction}\label{intro}
Agents have expectations about the world around, 
and they reason on the basis of what they observe around them, and such observations may or may not match with the expectations they have about their surroundings. Let us first provide two examples showing the diverse nature of such reasoning phenomena. 
\begin{itemize}
	\item Consider a person traveling from  Switzerland to France in a car. Here is one way she would know whether she is in France. According to her expectations based on the traffic light signals of the different states, if she observes the sequence of (green$^\ast$-amber-red$^\ast$)$^\ast$ ($\ast$ denotes the continuance of such sequences), she would know that she is in France, whereas if she observes (green$^\ast$-amber-red$^\ast$-amber)$^\ast,$ she would know that she is not.
	\item Consider three agents denoted by Sender (S), Receiver (R) and Attacker (A). Suppose S and R have already agreed that if S wants to convey that some decision has been taken, S would send a message, say $m$, to R; otherwise, S would send some other message, say $m'$, to R. Suppose also that A has no information about this agreement. Then upon getting a message from S, there would be a change in the knowledge state of R but not A.
\end{itemize}
The first example concerns a certain rule that we follow in our daily life, and the second example brings in the flavour of coded message-passing under adversarial attacks. Expectations about the moves and strategies of other players also occur naturally in game theory, and possible behaviours of players are represented in these terms.  Moving from theory to actual games, in the strategy video game Starcraft\footnote{\url{https://en.wikipedia.org/wiki/StarCraft_(video_game)}}, one player may know/expect that the other player will attack her base as soon as possible, and thus may play accordingly. Games like Hanabi\footnote{\url{https://en.wikipedia.org/wiki/Hanabi_(card_game)}}, and Colored Trails \cite{DBLP:journals/aamas/WeerdVV17} also consider the connection between expectations and observations regarding the moves and strategies of the other players.

The challenge now is to build intelligent systems that are able to reason about knowledge regarding expectations, and plan accordingly. 
Whereas epistemic logic \cite{RAK} and more generally, its dynamic extensions, popularly known as \emph{dynamic epistemic logics} ($\DEL$) \cite{DEL} help us to build agents that reason about knowledge, they do not offer any mechanism dealing with expectations. 
In the same way, epistemic planning, based on the model checking of $\DEL$
(\cite{DBLP:journals/ai/BolanderCPS20}), 
extends classical planning with epistemic reasoning,
but is unable to take agent expectations into account.
Fortunately, following \cite{DBLP:conf/icla/Wang11}, 
\emph{Public observation logic} ($\POL$)  \cite{van2014hidden}, a variant of $\DEL$, reasons about knowledge regarding expectations. $\POL$ provides dynamic operators for verifying whether a given epistemic property holds after observing some sequence of observations matching certain expectations that are modelled by regular expressions~$\pi$.

However, investigations on algorithmic properties of $\POL$ were left open. In this paper, we show that the $\POL$ model checking is decidable and $\PSPACE$-complete. Our result relies on automata theory and the careful use of an oracle for deciding the algorithm running in poly-space.

%
For practical purposes, we investigate syntactic fragments that offer better complexities than reasoning in the full language of $\POL$ (see Figure~\ref{figure:results}), and are suitable for relevant verification tasks:
\begin{itemize}
    \item the $\word$ fragment, where any regular expression $\pi$ is a \textit{word}, is sufficient to verify that some given plan leads to a state satisfying some epistemic property;
\item the $\existential$ fragment, where the dynamic operators of $\POL$ are all \emph{existential}, is suitable for epistemic planning (e.g., does there exist a plan?);
\item the $\starfree$ fragment, where the regular expressions $\pi$ are \textit{star-free}, embeds \emph{bounded} planning (in which sequences of observations to synthesize are bounded by some constant). In particular, the $\starfree\existential$ fragment (i.e. the intersection of the $\starfree$ and the $\existential$ fragments) is suitable for bounded epistemic planning.
\end{itemize}

\begin{figure}
	\begin{center}
		\newcommand{\ybottom}{-1.2}
		\newcommand{\sep}[1]{\draw (#1, 1.6) -- (#1, \ybottom);}
		\tikzstyle{box} = [inner sep=0.4mm]
		\tikzstyle{inclusion} = [-latex, line width=0.5mm]
		\scalebox{0.9}{
			\begin{tikzpicture}[xscale=2.2, yscale=1]
				\draw (-1.4, \ybottom) rectangle (2.8, 1.6);
				\draw (-1.4, 1) -- (2.8, 1);
				\node[text width=3cm,box] (full) at (2.5, 0) {Full language (Th.~\ref{theorem:full})};
				\node[text width=15mm,box] (starfree) at (1.1, 0.5) {$\starfree$ (Th.~\ref{theorem:starfree})};
				\node[text width=14mm] (exist) at (1.1, -0.5) {$\existential$ (Th.~\ref{theorem:existential})};
				\node[text width=15mm] (starfreeexist) at (0, -0.5) {$\starfree$ $\existential$ (Th.~\ref{theorem:starfreeexist})};
				\node[text width=8mm] (word) at (-1, 0.5) {$\word$ (Th.~\ref{theorem:word})};
				\node at (1.5, 1.2) {$\PSPACE$-complete};
				\node at (0, 1.2) {$\NP$-complete};
				\node at (-1, 1.24) {in $\PTIME$};
				\sep {0.55}
				\sep {-0.6}
				\draw[inclusion] (starfree) -- (full);
				\draw[inclusion] (exist) -- (full);
				\draw[inclusion] (starfreeexist) edge[out=10, in=200] (starfree);
				\draw[inclusion] (starfreeexist) -- (exist);
				\draw[inclusion] (word) -- (starfree);
		\end{tikzpicture}}
	\end{center}
	\caption{Complexity results of model checking for different fragment of $\POL$. (arrows represent inclusion of fragments).\label{figure:results}}
\end{figure}

\paragraph{Outline.} Section 2 recalls $\POL$ with a formal presentation of the two examples mentioned in the introduction. Section 3 deals with all our complexity results about the model checking problem of $\POL$. Section 4 shows the applicability of $\POL$ and its fragments in modelling interactive systems. It also includes a discussion on the implementation. Section 5  presents the related work and section 6 concludes the paper. 

\section{Background and Preliminaries}\label{back}

We first provide an 
overview of public observation logic ($\POL$) as introduced in \cite{van2014hidden}. Let $\Ag$ be a finite set of agents,  $\BP$ be a countable
set of propositions describing the facts about the world and $\Act$ be a finite set of actions. Below, we will not differentiate between the action of observing a phenomenon and the phenomenon itself.

\subsection{Observations}\label{obs} 

For our purposes, we assume \textit{observations} to be finite strings of actions. In the traffic example, an observation may be \emph{green-amber-red-green} (abbreviated as $garg$) or, \emph{green-amber-red-amber-green} (abbreviated as $garag$), among others, whereas, in the message-passing example, an observation is either $m$ or $m'$.  An agent may expect different (even infinitely many) potential observations 
at a given state, but to model 
agent expectations, 
they are described in a finitary way by introducing the {\em observation expressions} (as regular expressions over $\Act$): 


\begin{definition}[Observation expressions]
	Given a finite set of action symbols $\Act$, the language $\mathcal{L}_{\it obs}$ of {\em observation expressions} is defined by the following BNF:
	$$\begin{array}{r@{\quad::= \quad}l}
		\pi  &
		\emptyset\mid\
		\ep\mid
		a
		\mid \pi\cdot \pi
		\mid \pi + \pi
		\mid \pi^*\\
	\end{array}$$ 
	\noindent where $\emptyset$ denotes the empty set of observations,  the constant $\ep$ represents the empty string, $\cdot$ denotes concatenation, $+$ is union,  $*$ represents iteration and $a\in\Act$.
	\label{definition:obsexpression}
\end{definition}

In the traffic example, the observation expression $(g^\ast ar^\ast)^\ast$ models the traveller's expectation of traffic signals in case she is in France. In the other one, the expression $m$ models the expectation of the receiver in case a decision is made.

\subsection{Models}	

	{\em Epistemic expectation models} 
	\cite{van2014hidden} capture the expected observations of agents. 
	They can be seen as epistemic models \cite{RAK} together with, for each world, a set of potential or expected observations.
	Recall that an epistemic model is a tuple $\langle S, \sim ,V\rangle$ where $S$ is a non-empty set of worlds, $\sim$ assigns to each agent in $\Ag$ an equivalence relation $\sim_i \subseteq S \times S$, and $V : S \rightarrow 2^{\BP}$ is a valuation function.
	
	\begin{definition}[Epistemic expectation model]
		\label{Model1}
		An {\em epistemic expectation model} $\M$ is a quadruple $\langle S, \sim ,V, \Exp\rangle,$ where $\langle S, \sim ,V\rangle$ is an epistemic model 
		and  $\Exp : S \imp \LL_{\it obs}$ is an expected observation function assigning to each world an observation expression $\pi$ such that $\LL(\pi)\not=\emptyset$ (non-empty set of finite sequences of observations).   
		A pointed epistemic expectation model is a pair $(\M, s)$ where $\M = \langle S, \sim ,V, \Exp\rangle$ is an epistemic expectation model and $s \in S$.
	\end{definition}  
	
			

			
	
	Intuitively, $\Exp$ assigns to each world a set of potential or expected observations. We now provide the model definitions of the examples mentioned in the introduction. The traffic light example where only one agent (the traveller) is involved can be depicted by the model $\M_{tl}$ (cf. Figure \ref{figure:traffic}). Unless the traveller ($T$) observes the respective sequences of traffic signals, she would not know whether she is in France ($f$) or not ($\neg f$). Her uncertainty is represented by the (bi-directional) link between the two worlds $s$ and $t$. 
	For the sake of brevity, 
	we do not draw
	the reflexive arrows. 
	Similar representations are used in the message-passing example as well (cf. Figure~\ref{figure:message}). Here, the receiver would get to know about the decision depending on the message he receives, whereas, the attacker would be ignorant of the fact irrespective of the message ($m$ or $m'$) she receives.
	\par
	Given an {\em epistemic expectation model} $\M = \langle S, \sim ,V, \Exp\rangle$, the size of these observation expressions is defined as follows.

\begin{definition}[Size of observation expressions]
	    Given an observation expression $\pi$, the size of $\pi$, $|\pi|$ can be defined inductively as follows:
	    
        $\begin{array}{l}
            |\ep| = |\emptyset| = 0\\
        	|a| = 1\\
        	|\pi\cdot \pi'|= |\pi + \pi'|=|\pi| + |\pi'| + 1\\
        	|\pi^*|= |\pi| + 1
        \end{array}$
        
\end{definition}

In the traffic example, the observation expression $(g^\ast ar^\ast)^\ast$ models the traveller's expectation of traffic signals in case she is in France.
Evidently, in this case, the size of the observation expression, $(g^\ast ar^\ast)^\ast$, is 6. The semantics for the observation expressions 
are given by {\em sets of  observations} (strings over $\Act$), similar to those for regular expressions \cite{DBLP:books/daglib/0086373}.

\begin{definition}[Semantics of observation expressions]
Given an observation expression $\pi$, the corresponding {\em set of observations}, denoted by $\LL(\pi)$, is the set of finite strings over $\Act$ defined as follows:

$\begin{array}{l}
	\LL(\emptyset)=\emptyset\\
	\LL(\ep)=\{\epsilon\}\\
	\LL(a)=\{a\}\\
	\LL(\pi\cdot \pi')= \{wv \mid w \in \LL(\pi) \mbox{ and } v \in \LL(\pi')\}\\
	\LL({\pi} + \pi')=\LL(\pi)\cup\LL(\pi')\\
	\LL(\pi^*)=\{\ep\}\cup\bigcup_{n> 0}(\LL(\underbrace{\pi\cdots\pi}_n))
	\end{array}$
	
\end{definition}

For example, $\LL(m)=\{m\}$, and $\LL((g^\ast ar^\ast)^\ast) = \{\ep, a, ga, ar, gar, gargar, \ldots \}$. Before going any further, let us first introduce the notion of $\emph{residue}$ which play an important role in providing the semantics of $\POL$:

\begin{definition}[Residue]\label{def:residue}
Given an observation expression $\pi$ and a word (finite string over $\Act$) $w$, a residue of $\pi$ with respect to $w$, denoted by $\pi\regdiv w$, is an observation expression defined with an auxiliary output function $o$ from the set of observation expressions over $\Act$ to $\{\emptyset, \ep\}$. If $\ep\in \LL(\pi)$, then $o$ maps $\pi$ to \ep; otherwise,it maps $\pi$ to $\emptyset$:  
	
	$\begin{array}{l@{\quad}l}
		\ep  \regdiv{a}=\dl \regdiv{a}= b\regdiv{a}=\dl  \quad(a\not= b)\\
		a \regdiv{a} =\ep \\
		(\pi+\pi') \regdiv{a}=\pi\regdiv{a} + \pi'\regdiv{a}\\
		(\pi\cdot\pi') \regdiv{a}=(\pi\regdiv{a})\cdot\pi'+o(\pi)\cdot(\pi'\regdiv{a})\\
		\pi^*\regdiv{a}=\pi\regdiv{a}\cdot \pi^*\\
		\pi\regdiv{a_0\cdots a_n}=\pi\regdiv{a_0}\regdiv{a_1}\dots\regdiv{a_n}\\
		\end{array}
		$
		\medskip
		
		\end{definition}
		%
		%
Intuitively, the regular language $\pi\regdiv w$ is the set of words given by  $\{v \in \Act^* \mid wv\in\LL(\pi)\}$. The regular language $\prefixes(\pi)$ is the set of prefixes of words in $\LL(\pi)$, that is, $w\in \prefixes(\pi)$ 
			iff $\exists v\in \Sigma^*$ such that $wv\in\LL(\pi)$ 
			iff $\exists v\in \Act^*$ such that $wv\in\LL(\pi)$ (namely, $\LL(\pi\regdiv w)\not=\emptyset$)
\begin{example}
$(g^*ar^*a)^*\regdiv (garaga) = r^*(g^*ar^*a)^*$ denotes the language of words $\{v: garaga\cdot v \in \LL((g^*ar^*)^*)\}$. The set $\prefixes((g^*ar^*a)^*)$ contains $garaga$. However, $garg$ is not in $\prefixes((g^*ar^*a)^*)$ and $(g^*ar^*a)^*\regdiv (garg)$ is empty.
\end{example}

We recall that by Thomson's construction \cite{DBLP:books/daglib/0086373}, there is a non-deterministic finite automaton (NFA) of polynomial size in $|\pi|$, that recognizes the language  $\LL(\pi)$. For simplicity in notations, that NFA is also denoted by $\pi$.

Given a word $w$, $\pi\regdiv w$ denotes the regular language $\{v\mid wv\in\LL(\pi)\}$. $\pi\regdiv w$ corresponds to right residuation with respect to the monoid $(\Act^*, \cdot, \ep)$. The representation of $\pi\regdiv w$ is the NFA $\pi$ augmented by the subset of states of $\pi$ that one reaches after having read $w$ from the initial states of $\pi$. Computing the representation of $\pi \regdiv w$ is polynomial in $|\pi|$ and $|w|$.
	In the sequel, $DFA(\pi)$ denotes the minimal deterministic finite automaton for $\pi$ (unique up to isomorphism). Computing $DFA(\pi)$ is exponential in $|\pi|$ in the worst-case.

	\begin{figure}
		\begin{center}
			\begin{tikzpicture}
				\node[world] (s) {$f$};
				\node[world] (t) at (4, 0) {$\lnot f$};
				\node[left = 0mm of s] {$s$};
				\node[right = 0mm of t] {$t$};
				\node[below = 0mm of s] {$(g^*ar^*)^*$};
				\node[below = 0mm of t] {$(g^*ar^*a)^*$};
				\draw (s) edge node[above] {$Traveller$} (t);
			\end{tikzpicture}
		\end{center}
		\vspace{-.5cm}
		\caption{$\M_{tl}$ (the traffic light model)\label{figure:traffic}.}
	\end{figure}
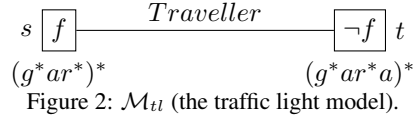
	
	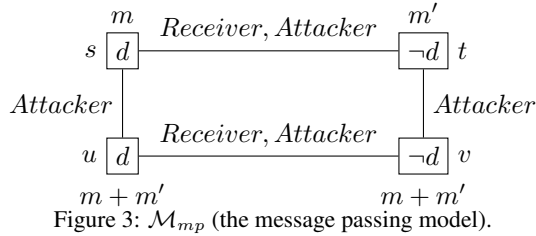
\begin{figure}
		\begin{center}
			\begin{tikzpicture}[yscale=1.4]
				\node[world] (s) {$d$};
				\node[world] (t) at (4, 0) {$\lnot d$};
				\node[world] (u) at (0, -1) {$d$};
				\node[world] (v) at (4, -1) {$\lnot d$};
				\node[left = 0mm of s] {$s$};
				\node[right = 0mm of t] {$t$};
				\node[left = 0mm of u] {$u$};
				\node[right = 0mm of v] {$v$};
				\node[above = 0mm of s] {$m$};
				\node[above = 0mm of t] {$m'$};
				\node[below = 0mm of u] {$m+m'$};
				\node[below = 0mm of v] {$m+m'$};
				\draw (s) edge node[above] {$Receiver, Attacker$} (t);
				\draw (u) edge node[above] {$Receiver, Attacker$} (v);
				\draw (s) edge node[left] {$Attacker$} (u);
				\draw (t) edge node[right] {$Attacker$} (v);
			\end{tikzpicture}
		\end{center}
		\vspace{-.5cm}
		\caption{$\M_{mp}$ (the message passing model)\label{figure:message}.}
	\end{figure}

	The main idea for introducing this logic was to reason about agent knowledge via the matching of observations and expectations. In line of public announcement logic \cite{DBLP:journals/synthese/Plaza07}, it is assumed that when a certain phenomenon is observed, people delete some impossible scenarios where they would not expect that observation to happen.
	To this end, the update of epistemic expectation models according to some observation $w\in\Act^*$ is defined below. The idea behind an updated expectation model  is 
	to delete the worlds where the  observation $w$ could not have been happened.
	
	\begin{definition}[Update by observation]\label{def.upobs}
		Let $w$ be an observation over $\Act$ and let $\M=\langle S,\sim,V,\Exp\rangle$ be an epistemic expectation model. The updated model $\M|_w= \langle S',\sim',V',\Exp'\rangle$ is defined by: $S' = \{s \in S \mid \LL(\Exp(s)\regdiv w)\not=\emptyset\}$, ${\sim'_i}={\sim_i}|_{S'\times
			S'},$ $V'=V|_{S'},$ and for all $s\in S'$, $\Exp'(s)=\Exp(s)\regdiv w$.
	\end{definition}
	
In Definition~\ref{def.upobs}, $S'$ is the set of worlds $s$ in $S$ where the word $w$ can be observed, i.e., $\LL(\Exp(s)\regdiv w)\not=\emptyset$. The definitions of $\sim'$ and $V'$ are given by usual restrictions to $S'$. The expectation at each world in $S'$ gets updated by observing the word $w$: finite strings of actions that are of the form $wu$ are replaced by $u$ while strings that are not of the form $wu$ get removed because they do not match the expectation.

\begin{example}
    Consider the model $\M_{tl}$ of Figure~\ref{figure:traffic} and  $w = garga$. The updated model $\M_{tl}|_w =\langle S',\sim',V',\Exp'\rangle$ is such that $S' = \{s\}$: world $t$ is removed because $garga$ is not a prefix of any word in $\LL((g^*ar^*a)^*)$. The expectation $\Exp(s)$ is replaced by $\Exp'(s) = (g^*ar^*)^*\regdiv (garga) = r^*(g^*ar^*)^*$.
\end{example}
	
	%

		
		\subsection{Public observation logic ($\POL$)}
		
		To reason about agent expectations and observations, the language for $\POL$ is provided below.
		
		\begin{definition}[Syntax]  
			The {\em formulas} $\phi$ of $\POL$ 
			are given by:
			
			\vspace{.1cm}
			
		     $\begin{array}{r@{\quad::= \quad}l}
				\phi  &
				\top
				\mid
				p
				\mid \neg \phi
				\mid \phi \land \phi
				\mid K_i\phi
				\mid [\pi] \phi,
			\end{array}$
			
			\vspace{.1cm}
			
			\noindent where $p\in\BP$, $i\in\Ag$, and $\pi\in\mathcal{L}_{\it obs}$. 
		\end{definition}

Also further we talk about various fragments, or special cases, of Public Observation Logic. Here various restrictions are promised on the syntax of regular languages or the syntax of the formulas. But before that, we define a normal form of $\POL$ formulas called the Negative Normal Form.
\begin{definition}[Negative Normal Form (NNF)]
Given a set of propositional letters $\BP$, the Negative Normal Form of a $\POL$ formula is defined recursively as following:
$$\begin{array}{r@{\quad::= \quad}l}
				\phi  &
				\top
				\mid
				p
				\mid \neg p
				\mid \phi \land \phi
				\mid K_i\phi
				\mid \hat{K_i}\phi
				\mid [\pi] \phi
				\mid \ldiamondarg{\pi}\phi
\end{array}$$
where $p\in\BP$, $i\in \Ag$ and $\pi\in\LL_{obs}$.
\end{definition}
In simpler words, a $\POL$ formula is in a Negative Normal Form if and only if the negations ($\neg$) occur in the formula only before a propositional letter. Now we can define formally the various fragments of $\POL$.

\begin{definition}[$\starfree$ Formula ans $\starfree$ fragment of $\POL$]
Given a set of propositional letters $\BP$, the $\starfree$ formulas of $\POL$ is defined recursively as following:
$$\begin{array}{r@{\quad::= \quad}l}
				\phi  &
				\top
				\mid
				p
				\mid \neg \phi
				\mid \phi \land \phi
				\mid K_i\phi
				\mid [\pi] \phi
			\end{array}$$
where $p\in\BP$, $i\in \Ag$ and $\pi\in\LL_{obs}$ is defined recursively as:
$$\begin{array}{r@{\quad::= \quad}l}
				\pi  &
				a\in\Sigma
				\mid \ep
				\mid \pi + \pi
				\mid \pi.\pi
\end{array}$$

The $\starfree$ Fragment of $\POL$ consists only $\starfree$ formulas of $\POL$.
\end{definition}


Hence the $\starfree$ Fragment of $\POL$ just consists of formulas where the regular expression in the modalities does not contain any Kleene-Star in it.

\begin{definition}[$\existential$ Formula and $\existential$ fragment of $\POL$]
Given a set of propositional letters $\BP$, the $\existential$ formulas of $\POL$ can be inductively defined as:
$$\begin{array}{r@{\quad::= \quad}l}
				\phi  &
				\top
				\mid
				p
				\mid \neg p
				\mid \phi \land \phi
				\mid \hat{K_i}\phi
				\mid \ldiamondarg{\pi}\phi
\end{array}$$
where $p\in\BP$, $i\in \Ag$ and $\pi\in\LL_{obs}$.

The $\existential$ Fragment of $\POL$ consists only $\existential$ formulas of $\POL$.
\end{definition}

That is, the formulas of $\existential$ Fragments are all in NNF and only contains $\hat{K_i}$ and $\ldiamondarg{\pi}$ operators for an agent $i$ and a regular expression $\pi$.


\begin{definition}[$\starfree-\existential$ Formula and $\starfree-\existential$ fragment of $\POL$]
Given a set of propositional letters $\BP$, the $\existential$ formulas of $\POL$ can be inductively defined as:
$$\begin{array}{r@{\quad::= \quad}l}
				\phi  &
				\top
				\mid
				p
				\mid \neg p
				\mid \phi \land \phi
				\mid \hat{K_i}\phi
				\mid \ldiamondarg{\pi}\phi
\end{array}$$
where $p\in\BP$, $i\in \Ag$ and $\pi\in\LL_{obs}$, where $\pi$ is inductively defined as:
$$\begin{array}{r@{\quad::= \quad}l}
				\pi  &
				a\in\Sigma
				\mid \ep
				\mid \pi + \pi
				\mid \pi.\pi
\end{array}$$

The $\starfree-\existential$ Fragment of $\POL$ consists only $\starfree-\existential$ formulas of $\POL$.
\end{definition}

\begin{definition}[$\word$ Formula and $\word$ fragment of $\POL$]
Given a set of propositional letters $\BP$, the $\word$ formulas of $\POL$ is defined recursively as following:
$$\begin{array}{r@{\quad::= \quad}l}
				\phi  &
				\top
				\mid
				p
				\mid \neg \phi
				\mid \phi \land \phi
				\mid K_i\phi
				\mid [\pi] \phi
			\end{array}$$
where $p\in\BP$, $i\in \Ag$ and $\pi\in\LL_{obs}$ is defined recursively as:
$$\begin{array}{r@{\quad::= \quad}l}
				\pi  &
				a\in\Sigma
				\mid \ep
				\mid \pi.\pi
\end{array}$$

The $\word$ Fragment of $\POL$ consists only $\word$ formulas of $\POL$.
\end{definition}

Therefore, in other words, the $\word$ Fragment of $\POL$ contains only $\POL$ formulas where the regular expressions in the modalities are just words over $\Sigma$.
		
		\begin{definition}[Truth definition] Given an epistemic expectation model $\M$ = $\langle S, \sim ,V, \Exp\rangle$, a world $s\in S$, and a $\POL$-formula $\phi$, the truth of $\phi$ at $s$, denoted by $\M,s\vDash \phi$, is defined by induction on $\phi$ as follows: 
			$$
			\begin{array}{rcl}
				\M,s\vDash p &\Leftrightarrow& p\in V(s)\\
				\M,s\vDash \neg\phi &\Leftrightarrow&   \M,s\nvDash \phi \\
				\M,s\vDash \phi\land \psi &\Leftrightarrow& \M,s\vDash \phi \textrm{ and } \M,  s\vDash \psi \\
				\M,s\vDash K_i\phi &\Leftrightarrow& \textrm{for all }t: (s\sim_i t  \textrm{ implies } \M,t\vDash\phi)\\
				\M,s\vDash [\pi]\phi &\Leftrightarrow& \textrm{for all }w\in\LL(\pi) \cap \prefixes(\Exp(s))  \\
				&& \textrm{ we have }\M|_w,s\vDash \phi)
			\end{array}
			$$

		\end{definition}
		
		The truth of $K_i\phi$ at $s$ follows the standard possible world semantics of epistemic logic. The formula $[\pi]\phi$ holds at $s$ if for every observation $w$ in the set $\LL(\pi)$  that matches with the beginning  of (i.e., is a prefix of) some expected observation in $s$, $\phi$ holds at $s$ in the updated model $\M|_w$. Note that $s$ is a world in $\M|_w$ because $w \in \prefixes(\Exp(s))$. Similarly, the truth definition of $\ldiaarg{\pi}\phi$ can be given as follows: $\M,s\vDash \ldiaarg{\pi}\phi \textrm{ iff there exists }w\in\LL(\pi) \cap \prefixes(\Exp(s)) 
		\textrm{ such that }\M|_w,s\vDash \phi$. Intuitively, the formula $\ldiaarg{\pi}\phi$ holds at $s$ if there is an observation $w$ in $\LL(\pi)$  that matches with the beginning  of some expected observation in $s$, and $\phi$ holds at $s$ in the updated model $\M|_w$. For the examples described earlier, we have:
		
		\begin{itemize}
			\item[-] $\M_{tl}, s \models [g^*]\neg(K_T f \lor K_T \neg f)$. This example corresponds to a safety property: there is no leak of information when observing an arbitrary number of $g$'s because it is compatible with both the expectation $g^*ar^*)^*$ of the French traffic light system, and the expectation $g^*ar^*a)^*$ of the non-French one.
			
			\item[-] $\M_{tl}, s \models \ldiamondarg{(garg)^*}(K_T f)$. This example in the $\existential$ fragment shows that we can express the existence of a sequence of observations that reveals that the traveller is in France.
			
			\item[-] $\M_{tl}, s \models \ldiamondarg{gar}\neg(K_T f \lor K_T \neg f)$. This example in the $\word$ fragment expresses that the sequence of observations $gar$ would keep the traveller ignorant about her whereabouts.  
			
			\item[-] $\M_{mp}, s \models \ldiamondarg{m}((K_R d \land \neg K_A d)$. This example, also in the $\word$ fragment, expresses that after receiving the message $m$, the receiver gets to know about the decision but the attacker remains ignorant.
		\end{itemize}
		

		\noindent\textbf{Model Checking for $\POL$: } 
		Given a finite pointed epistemic expectation model $\M,s$, and a formula $\phi$, does $\M,s \models \phi$? We are interested in knowing the complexity of this problem. 
		We will also consider restrictions of the model checking when the input formula $\phi$ is restricted to be in one of the syntactic fragments: $\word$, $\starfree$, $\existential$ and $\starfree\existential$.
		
		
		

\section{Complexity results}\label{results}

The main complexity result that we prove is given below. 
For all the proof details, see the appendix.

\begin{restatable}{theorem}{POLPSPACE}\label{theorem:full}
    $\POL$ model checking is $\PSPACE$-complete.
\end{restatable}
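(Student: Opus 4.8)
The plan is to prove the two directions separately: membership in $\PSPACE$ and $\PSPACE$-hardness.

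\paragraph{Membership in $\PSPACE$.} First I would observe that the semantics suggests a natural recursive procedure. Given $\M,s$ and $\phi$, the only problematic construct is $[\pi]\psi$ (and dually $\ldiamondarg{\pi}\psi$): we must quantify over all $w \in \LL(\pi) \cap \prefixes(\Exp(s))$, and there can be infinitely (or exponentially) many such $w$. The key insight is that, for the purpose of evaluating $\psi$ in $\M|_w, s$, the model $\M|_w$ depends on $w$ only through, for each world $t$, the residue $\Exp(t) \regdiv w$ — and by the remarks following Definition~\ref{def:residue}, each residue $\pi' \regdiv w$ is represented by the fixed NFA for $\pi'$ together with the subset of states reachable on reading $w$. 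So the reachable states of $\M|_w$ (over all relevant $w$) form a finite, in fact poly-size-per-world described, collection. I would set up the recursion so that the current ``observation history'' is carried as, for each world $t$ of the original model, a subset of states of $\mathrm{NFA}(\Exp(t))$ (this is polynomial space in the input), rather than as the word $w$ itself. To handle the universal quantifier over $w \in \LL(\pi)\cap\prefixes(\Exp(s))$ within $\PSPACE$, I would use the standard trick: nondeterministically guess $w$ letter by letter, updating all the relevant NFA state-subsets (including a state-subset for the NFA of $\pi$, to track membership in $\LL(\pi)$, and we stop as soon as the $\Exp(s)$-subset becomes empty, i.e.\ $w$ leaves $\prefixes(\Exp(s))$), and recursively check $\psi$ at each point where the $\pi$-subset contains an accepting state. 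Since nested modalities increase the recursion depth only linearly in $|\phi|$, and each recursive level stores only polynomially many polynomial-size state-subsets, the whole computation runs in polynomial space — using $\NPSPACE = \PSPACE$ (Savitch), or more carefully an alternating poly-time / poly-space recursion with an oracle, to absorb the guessing. I would phrase this as an algorithm $\Decide(\M, (\text{state-subsets}), \phi)$ and argue correctness by induction on $\phi$ and space usage by a straightforward accounting.

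\paragraph{$\PSPACE$-hardness.} For hardness I would reduce from a known $\PSPACE$-complete problem. The natural candidates are QBF (quantified boolean formulas) or, more in the spirit of the automata machinery, the universality / intersection-nonemptiness problem for NFAs (or the "does a given star-free / regular expression denote $\Sigma^*$" problem), both $\PSPACE$-complete. I would aim to encode such a problem using a single observation modality whose regular expression $\pi$ is built from the input, on a tiny model (say two worlds distinguished by one proposition, as in $\M_{tl}$), so that $\M, s \models [\pi]\phi$ iff the regular-language condition holds. For instance, emptiness of $\LL(\pi) \cap \prefixes(\Exp(s))$ makes $[\pi]\phi$ vacuously true, which lets us test language (non)emptiness/containment; and by choosing $\Exp(s) = \Sigma^*$ and letting $\pi$ encode the complement automaton, $[\pi]\bot$ becomes equivalent to $\LL(\pi) = \emptyset$, i.e.\ to universality of the original NFA. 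Alternatively a direct QBF encoding: alternate $[\cdot]$ and $\ldiamondarg{\cdot}$ modalities with single-letter expressions $a_i + \bar a_i$ to guess a truth assignment to the quantified variables in order, then evaluate the matrix as a propositional formula over the worlds encoding the assignment. I would pick whichever encoding is cleanest to verify and present the correctness of the reduction.

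\paragraph{Main obstacle.} The routine part is hardness; the delicate part is the $\PSPACE$ upper bound, specifically the bookkeeping that keeps the recursion inside polynomial space. The subtlety is that naively one would want to materialize $\M|_w$ and recurse, but $w$ (and $DFA(\pi)$) can be exponential, so one must never write $w$ down and never determinize — instead thread the NFA subset-states through both the guessing loop for $w$ and the recursive calls, and argue that this representation is faithful to the true updated model $\M|_w$ at every world simultaneously (this is exactly where the polynomial-size residue representation from the text is used). A secondary point to get right is that the recursion depth is bounded by $|\phi|$ but each $[\pi]$ spawns an inner nondeterministic search of unbounded \emph{length} (the word $w$ can be long) — this is fine for space but one must make sure the two kinds of nondeterminism/recursion compose correctly, which is cleanest to state via an oracle Turing machine (a poly-space deterministic machine calling, or being called by, an $\NP$/$\mathrm{coNP}$-type oracle, all of which collapses into $\PSPACE$).
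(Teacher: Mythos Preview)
Your proposal is correct and takes essentially the same approach as the paper: membership via the polynomial-size NFA state-subset representation of $\M|_w$, letter-by-letter guessing of $w$, and Savitch (the paper packages this as a deterministic enumeration of $\Gamma^{\M}$ together with a $\PSPACE$ oracle deciding whether a given $\M'$ equals $\M|_w$ for some $w\in\LL(\pi)$, but the content is identical to what you describe); hardness via a TQBF encoding with alternating $[\cdot]/\ldiamondarg{\cdot}$ modalities and via NFA intersection-nonemptiness, both of which the paper actually uses (for its $\starfree$ and $\existential$ fragments respectively). One minor caveat: your third hardness alternative via NFA \emph{universality} does not give a polynomial reduction as stated, since writing a regular expression $\pi$ for the complement of an NFA is in general exponential---but since your other two suggested reductions are sound, this is harmless.
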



\noindent\textbf{$\POL$ model checking is in $\PSPACE$:} For proving the upper bound result, that is, showing that $\POL$ model checking is in $\PSPACE$, we design the algorithm $\DecidePSPACE$ (Algorithm~\ref{algoMAIN:PSPACE}). It takes as input a $\POL$ model $\M = \ldiamondarg{S,\sim,V,\Exp}$, an initial starting world  $s\in S$, and a $\POL$ formula $\varphi$ and returns $\T$ iff $\M, s \models \phi$. 
We also prove that the algorithm $\DecidePSPACE$ uses polynomial space. 

The recursive algorithm $\DecidePSPACE$ is divided
into various cases depending on the structure of $\phi$. The subtle case is the observation modality $\ldiaarg \pi \psi$ (that is dealt with in lines~\ref{ln:ifspi} to \ref{ln:ifepi}).  It follows from the truth definition that $\M, s\vDash \ldiaarg \pi \psi$ iff there exists a $w\in \LL(\pi)$ such that $\M|_w, s\vDash \psi$. 
Here we observe that for any $\M$ and $w$ the model $\M|_w$ can be represented by a string of size polynomial in the size of $\M$ (This is because $\M$ and $\M|_w$ just differ by their expected observation functions as follows: for any world $t$,  $\Exp'(t) = \Exp(t)\backslash w$ and $\Exp(t)$ share the same NFA, just the set of initial states is different.).
Thus if we consider the set $\Gamma^\M = \{\M|_w\mid w\in\Sigma^*\}$, that is, the set of every updated model $\M|_w$, for a $\POL$ model $\M$, over all $w\in\Sigma^*$, we realize that all the models in $\Gamma^{\M}$ has size polynomial in the size of $\M$. 
Thus, by using both the observations together, when $\DecidePSPACE$ has to check if $\M, s\vDash \ldiaarg \pi \psi$ (in the \textbf{for} loop in lines~\ref{ln:forsoracle} to \ref{ln:foreoracle}) it goes over all models $\M'$ in $\Gamma^{\M}$ and (in line~\ref{line:forallmodelstart}) checks if $\M'=\M|_w$ for some $w\in \LL(\pi)$ and finally (in line~\ref{ln:recursive}) calls $\DecidePSPACE$ recursively to check if $\M|_w, s\vDash \psi$.



Thus $\DecidePSPACE$ needs to call a polynomial space subroutine to check if $\M' = \M|_w$ for some word $w \in \LL(\pi)$. 
To prove that there exists such a polynomial space algorithm 
we present a slightly convoluted argument. 
Algorithm~\ref{algoMAIN:algofororacle} provides a non-deterministic procedure running in polynomial space for deciding that $\M' = \M|_w$ for some word $w \in \LL(\pi)$.  By Savitch's theorem \cite{DBLP:journals/jcss/Savitch70} which states that $\mathsf{NPSPACE}$ = $\PSPACE$, we have that a polynomial space algorithm also exists. Algorithm~\ref{algoMAIN:algofororacle} starts by guessing a word of exponential length, sufficiently long enough to explore all subsets of current states for NFAs of  $\Exp(t)$ for all worlds $t$ in $\M$ and for the NFA of $\pi$. Then the algorithm guesses the word $w$ letter by letter and it progresses in the NFAs (note that it does not store the word $w$ as it can be of exponential length).  Algorithm~\ref{algoMAIN:algofororacle} accepts when $w \in \LL(\pi)$ (i.e., $\emptyword \in \LL(\pi')$) and $\M = \M'$. Otherwise, it rejects. 	



\begin{algorithm}[tb]
	\caption{$\DecidePSPACE$\label{algoMAIN:PSPACE}}
	\textbf{Input}: $\M = \ldiamondarg{S,\sim,V,\Exp}, s\in S, \varphi$\\
	\textbf{Output}: $\T$ iff $\M,s\vDash \varphi$
	\begin{algorithmic}[1] 
		\IF{$\varphi=p$ is a propositional variable }
		\STATE return $\T$ if $p \in V(s)$; $\Fa$ otherwise
		\ENDIF
		\IF{$\varphi = \neg\psi$}
		\STATE return not $\DecidePSPACE(\M,s,\psi)$
		\ENDIF
		\IF{$\varphi = \psi'\vee \psi$}
		\STATE return $\DecidePSPACE(\M, s, \psi)$ or $\DecidePSPACE(\M, s, \psi')$
		\ENDIF
		\IF{$\varphi = \ldiaarg\pi\psi$} \label{ln:ifspi}
		\FOR {all models $\M'$ in $\Gamma^{\M}$}\label{line:forallmodelstart} \label{ln:forsoracle}
		\IF{$s$ is a world in $\M'$ and the oracle claims that $\M' = \M|_w$ for some word $w \in \LL(\pi)$}\label{line:oraclecall} \label{ln:oracle}
		\STATE return $\DecidePSPACE(\M', s, \psi)$ \label{ln:recursive}
		\ENDIF
		\ENDFOR\label{line:forallmodelend} \label{ln:foreoracle}
		\STATE return $\Fa$
		\ENDIF \label{ln:ifepi}
		\IF{$\varphi = \hat{K_i}\psi$}  	
		\IF{$\exists t\in S$ such that $t\sim_i s$ and $\DecidePSPACE(\M, t, \psi)$} \label{line:K_i}
		\STATE  return $\T$
		\ELSE
		\STATE  return $\Fa$
		\ENDIF
		\ENDIF
	\end{algorithmic}
\end{algorithm}

\begin{algorithm}[tb]
	\caption{Non-deterministic  procedure  to  decides  that $\M' = \M|_w$ for some word $w \in \LL(\pi)$\label{algoMAIN:algofororacle}}
	\textbf{Input}: $\M = \ldiamondarg{S,\sim,V,\Exp}, \M'\in\Gamma^\M, \pi$\\
	\textbf{Output}: has an accepting execution  iff $\M' = \M|_w$ for some $w \in \LL(\pi)$
	\begin{algorithmic}[1] 
  \STATE$\pi' := \pi$
	\FOR{ $i = 1$ to $2^{\pi}\times \Pi_{t \in S} 2^{|\Exp(t)|}$ }\label{ln:oracleforstart}
	\IF{$\emptyword \in \LL(\pi')$ and $\M = \M'$}
	\STATE \algoaccept
	\ENDIF
	\STATE guess a letter $a$ from $\Sigma$ 
	\STATE $\pi' :=\pi'\regdiv a$
	\FOR{ each world $t$ in $S$ }
	\STATE $\Exp(t) := \Exp(t) \regdiv a$\hfill // we modify $\M$ locally
	\ENDFOR
	\ENDFOR
    \STATE	\algoreject
	\end{algorithmic}
\end{algorithm}


\noindent\textbf{Model checking  for $\POL$ is $\PSPACE$-hard:}
Interestingly, there are two sources for the model checking to be $\PSPACE$-hard: Kleene star in observation modalities as well as alternations in modalities (sequences of nested existential and universal modalities). 
We prove the $\PSPACE$-hardness of model checking against the $\existential$ fragment  and the $\starfree$ fragment of $\POL$ respectively.

\begin{restatable}{theorem}{existentialLB}
The model checking for the $\existential$ fragment of $\POL$ is $\PSPACE$-hard.\label{theorem:existential}
\end{restatable}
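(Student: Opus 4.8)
### Proof plan for Theorem~\ref{theorem:existential}

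The plan is to reduce from a known $\PSPACE$-complete problem that can be encoded using only existential epistemic modalities and existential observation modalities with Kleene star. A natural source is the non-emptiness of the intersection of many DFAs (equivalently, $\mathsf{QBF}$ via a standard configuration-reachability encoding), but since the hardness must already show up in the \emph{existential} fragment, the cleanest route is to reduce from the \emph{intersection non-emptiness problem} for deterministic finite automata: given DFAs $A_1, \dots, A_k$ over a common alphabet, decide whether $\bigcap_j \languageof{A_j} \neq \emptyset$. This problem is $\PSPACE$-complete (Kozen). The key idea is that a single observation expression $\pi$, read against several worlds carrying different expectation expressions, performs exactly this simultaneous-automata-simulation: $\M, s \models \ldiamondarg{\pi}\psi$ asks for a word $w \in \LL(\pi)$ that is a prefix of an expected observation at $s$, and by placing $\epsilon$-terminable copies of the $A_j$ as expectations on worlds reachable from $s$ via $\sim_i$, one can force $w$ to be accepted by every $A_j$.

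Concretely, first I would build an epistemic expectation model with one distinguished world $s$ and, for each automaton $A_j$, an $\sim_i$-accessible world $t_j$ whose expectation $\Exp(t_j)$ is a regular expression $\pi_j$ with $\LL(\pi_j) = \languageof{A_j}$ (such a $\pi_j$ of polynomial size exists by Kleene's theorem / the standard DFA-to-regex translation, but to keep sizes polynomial it is cleaner to let $\Exp(t_j)$ directly be the regular expression obtained from $A_j$ by the state-elimination method, or — since expectations are only ever manipulated through residues and the paper's NFA representation — to note that any polynomial-size automaton suffices). Take $\pi$ to be $\Sigma^*$ (i.e.\ $(\sum_{a \in \Sigma} a)^*$), which is star-free-free but perfectly legal in the $\existential$ fragment, so that $\LL(\pi) \cap \prefixes(\Exp(s))$ ranges over all prefixes of $\Exp(s)$; choose $\Exp(s) = \Sigma^*$ as well so every word is available. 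Then set $\psi := \bigwedge_j \hat{K_i}\, \mathit{end}_j$ — wait, that uses $\wedge$, which is fine in NNF — where $\mathit{end}_j$ is a fresh proposition true exactly at a world $t_j'$ that is $\sim_i$-reachable from $t_j$ and carries the "already-finished" expectation; the residue $\Exp(t_j)\regdiv w$ will contain $\epsilon$ iff $w \in \languageof{A_j}$, which after the update $\M|_w$ makes $t_j$ survive iff $w \in \prefixes(\languageof{A_j})$. The honest encoding is therefore: make $\neg\mathit{reject}_j$ hold at $t_j$ exactly when $\epsilon \in \LL(\Exp(t_j))$ is \emph{not} yet forced; more simply, after surviving, test membership by one more existential observation step. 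I would iterate the design until $\M, s \models \ldiamondarg{\pi}\bigwedge_j \hat{K_i}\psi_j$ holds iff there is a single $w$ accepted by all $A_j$.

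The cleanest concrete construction I would actually write down: worlds $\{s\} \cup \{t_1,\dots,t_k\}$, with $s \sim_i t_j$ for all $j$ (and $\sim_i$ the reflexive-symmetric-transitive closure), $\Exp(s) = \Sigma^*$, $\Exp(t_j) = \pi_j$ a polynomial-size regular expression for $A_j$, a proposition $q_j$ true only at $t_j$, and the target formula
$$\Phi \;:=\; \ldiamondarg{\Sigma^*}\Big(\bigwedge_{j=1}^k \hat{K_i}\, \ldiamondarg{\epsilon}\, q_j\Big).$$
Unfolding the semantics: $\M, s \models \Phi$ iff there is $w \in \Sigma^*$ with $w \in \prefixes(\Exp(s)) = \Sigma^*$ (always true) such that in $\M|_w$, for each $j$, world $t_j$ survives (i.e.\ $\LL(\pi_j \regdiv w) \neq \emptyset$, i.e.\ $w \in \prefixes(\languageof{A_j})$) and $\ldiamondarg{\epsilon} q_j$ holds there (i.e.\ $\epsilon \in \LL(\pi_j \regdiv w)$, i.e.\ $w \in \languageof{A_j}$). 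Hence $\M, s \models \Phi$ iff $\bigcap_j \languageof{A_j} \neq \emptyset$. The whole construction is clearly polynomial-time computable, $\Phi$ is in NNF and uses only $\hat{K_i}$, $\ldiamondarg{\cdot}$ and $\wedge$, so it lies in the $\existential$ fragment, completing the reduction.

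The main obstacle — and the place I would spend the most care — is making sure the reduction stays inside the \emph{syntactic} $\existential$ fragment as the paper defines it: the fragment forbids $K_i$, $[\pi]$ and general negation, so every automaton-acceptance test must be phrased positively via $\hat{K_i}$ and $\ldiamondarg{\cdot}$, and I must double-check that the "a prefix of some expectation" side condition built into the diamond semantics does not accidentally weaken the encoding (it does not, because I set $\Exp(s) = \Sigma^*$). A secondary technical point is ensuring the regular expressions $\pi_j$ representing the $A_j$ are genuinely of polynomial size; if the standard regex-from-DFA blowup is a concern, I would instead observe that the paper's own machinery only ever touches $\Exp(t)$ through its NFA and its residues, so it is legitimate (and I would state this explicitly) to take $\Exp(t_j)$ to be given by a polynomial-size NFA, which is all Algorithm~\ref{algoMAIN:algofororacle} and the semantics need. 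With those two points handled, correctness is a routine unfolding of Definitions~\ref{def:residue}, \ref{def.upobs} and the truth definition.
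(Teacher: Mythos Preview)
Your reduction from Kozen's DFA-intersection-non-emptiness problem is the same source problem the paper uses, and your single-agent star model with the conjunction $\bigwedge_j \hat K_i(\cdots)$ is a legitimate alternative to the paper's two-agent chain $0\sim_1 0'\sim_2 1\sim_1 1'\sim_2\cdots\sim_2 n$ traversed by $(\hat K_1\hat K_2)^{n+1}p$. The genuine gap is your acceptance test $\ldiamondarg{\epsilon}q_j$. You claim it holds at $t_j$ in $\M|_w$ iff $\epsilon\in\LL(\pi_j\regdiv w)$, i.e.\ iff $w$ is \emph{accepted} by $A_j$. But the diamond semantics asks for $\epsilon\in\LL(\epsilon)\cap\prefixes(\Exp'(t_j))$, and $\epsilon$ is a prefix of every non-empty language; since $t_j$ being present in $\M|_w$ already guarantees $\LL(\Exp'(t_j))\neq\emptyset$, the modality $\ldiamondarg{\epsilon}$ is vacuous and $\ldiamondarg{\epsilon}q_j$ collapses to $q_j$. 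Your $\Phi$ therefore only asks for some $w$ lying in $\bigcap_j\prefixes(\LL(A_j))$, which the empty word always witnesses (unless some $\LL(A_j)=\emptyset$, a polynomial-time check), so the reduction as written does not establish $\PSPACE$-hardness.

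The repair is cheap and stays inside the $\existential$ fragment: add a fresh end-marker $\#\notin\Sigma$, set $\Exp(s)=\Sigma^*{\cdot}\#$ and $\Exp(t_j)=\pi_j{\cdot}\#$, drop the inner $\ldiamondarg{\epsilon}$, and take $\Phi=\ldiamondarg{\Sigma^*{\cdot}\#}\bigwedge_j\hat K_i\,q_j$. Now $t_j$ survives after observing $w\#$ iff $w\#\in\prefixes(\LL(\pi_j){\cdot}\{\#\})$, and since $\#$ does not occur in $\pi_j$ this forces $w\in\LL(A_j)$; hence $\Phi$ holds iff $\bigcap_j\LL(A_j)\neq\emptyset$, as desired. (The paper's own construction also only tests world survival and speaks of $w\in\LL(\automaton_i)$ where strictly only prefix-hood is enforced, so the same end-marker patch is implicitly needed there too.) Your remark about representing $\Exp(t_j)$ directly by a polynomial-size NFA rather than converting to a regular expression is fine and matches what the paper does.
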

\begin{restatable}{theorem}{starfreeLB}
	The model checking for $\POL$ is $\PSPACE$-hard,  when the $\POL$ formulas are $\starfree$.\label{theorem:starfree}
\end{restatable}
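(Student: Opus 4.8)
The plan is to reduce from a known $\PSPACE$-complete problem, and the natural candidate is the model checking problem for $\PDL$ or, more directly, the acceptance problem of a polynomial-space Turing machine encoded via iterated regular-expression matching. Since we are forced to use only $\starfree$ observation expressions (no Kleene star in the modalities), we cannot exploit unbounded iteration inside a single $\pi$; the $\PSPACE$-hardness must instead come from the \emph{alternation} of nested universal observation modalities $[\pi]$ and epistemic modalities $K_i$. So the cleanest route is a reduction from the \emph{quantified Boolean formula} problem (QBF, i.e., TrueQBF), which is the canonical $\PSPACE$-complete problem.

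Concretely, given a QBF $\Phi = Q_1 x_1 Q_2 x_2 \cdots Q_n x_n\, \psi(x_1,\dots,x_n)$ with $\psi$ quantifier-free, I would build in polynomial time a pointed epistemic expectation model $(\M_\Phi, s_0)$ and a $\starfree$ $\POL$-formula $\phi_\Phi$ such that $\M_\Phi, s_0 \models \phi_\Phi$ iff $\Phi$ is true. The idea is to let each observation modality simulate one quantifier: a letter $a_i^1$ (resp.\ $a_i^0$) records the choice $x_i = \mathtt{true}$ (resp.\ $x_i=\mathtt{false}$), and the expectation function at the relevant worlds is set so that observing $a_i^1$ or $a_i^0$ "commits" to a truth value by surviving only in the world(s) encoding that value. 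Existential quantifiers $\exists x_i$ are translated using the diamond $\langle a_i^0 + a_i^1\rangle$ (which is $\starfree$ since it uses only $+$ and single letters), and universal quantifiers $\forall x_i$ using the box $[a_i^0 + a_i^1]$; both observation expressions are words-free, finite, star-free. After $n$ such modalities one reaches a submodel in which a designated set of "clause worlds" or "variable worlds" has been filtered according to the chosen assignment, and a Boolean combination of $K_i$ / propositional atoms checks that $\psi$ evaluates to true under that assignment. The epistemic modalities are used to read off, at the leaves, whether the committed assignment satisfies each clause of $\psi$ (e.g., one agent per clause, whose indistinguishability relation links exactly the worlds consistent with that clause being satisfied).

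The main obstacle — and the part requiring the most care — is designing the expectation function $\Exp$ and the accessibility relations so that, after the update $\M|_{w}$ by the word $w = a_1^{b_1} a_2^{b_2}\cdots a_n^{b_n}$, the surviving worlds exactly encode the partial assignment $b_1,\dots,b_n$ and nothing spurious remains: I must ensure that $[a_i^0 + a_i^1]\phi$ genuinely quantifies over \emph{both} choices of $x_i$ (so both $a_i^0$ and $a_i^1$ must be prefixes of some expected observation at the current world), while each individual choice properly restricts the model. A convenient device is to give each world an expectation of the form $a_1^{c_1} a_2^{c_2}\cdots a_n^{c_n}$ — a single word over the alphabet $\{a_i^0, a_i^1\}$ — so that world $w_{\vec c}$ survives the update by $a_1^{b_1}\cdots a_k^{b_k}$ precisely when $c_i = b_i$ for $i \le k$; then after all $n$ steps only the world matching the chosen full assignment survives, and one checks $\psi$ at it. Because all these expectation expressions are plain words, the construction is trivially $\starfree$, and the translation of the QBF into nested $[\cdot]$ and $\langle\cdot\rangle$ modalities is linear in $n$. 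I would finish by verifying, by induction on the quantifier prefix, that $\M_\Phi, s_0 \models \phi_\Phi$ iff $\Phi$ holds, and noting that the whole construction is computable in polynomial time, which yields $\PSPACE$-hardness for the $\starfree$ fragment; combined with the already-established $\PSPACE$ upper bound for full $\POL$, this also pins down the $\starfree$ fragment as $\PSPACE$-complete.
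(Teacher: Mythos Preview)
Your high-level strategy is the same as the paper's: reduce from TQBF and translate $\exists x_i$ as $\ldiamondarg{a_i^0 + a_i^1}$ and $\forall x_i$ as $[a_i^0 + a_i^1]$, so that the observation expressions inside the modalities are all star-free. That part is fine and matches the paper exactly.

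The gap is in your model construction. Your ``convenient device'' introduces one world $w_{\vec c}$ for every full assignment $\vec c \in \{0,1\}^n$, with $\Exp(w_{\vec c})$ the single word $a_1^{c_1}\cdots a_n^{c_n}$. That is $2^n$ worlds, so the reduction is not polynomial-time and you do not get $\PSPACE$-hardness. There is also a second problem you yourself flagged but did not solve: if the pointed world $s_0$ has a single word as its expectation, then only one of $a_1^0, a_1^1$ is a prefix of $\Exp(s_0)$, and hence $[a_1^0 + a_1^1]$ does \emph{not} range over both truth values (by the semantics, only $w \in \LL(\pi)\cap\prefixes(\Exp(s))$ are considered). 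So as written the box does not simulate a universal quantifier.

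The paper avoids both issues by using only $m$ worlds --- one per \emph{clause}, not one per assignment --- and by exploiting the fact that the $\starfree$ restriction applies only to the observation expressions $\pi$ in the formula, not to the $\Exp$ function of the model. Concretely, for clause $c_j$ the paper sets
\[
\Exp(j) \;=\; \sum_{\ell \in c_j} \bigl(\Sigma\setminus\{\tr(\ell),\tr(\neg\ell)\}\bigr)^{*}\,\tr(\ell)\,\bigl(\Sigma\setminus\{\tr(\ell),\tr(\neg\ell)\}\bigr)^{*},
\]
which uses Kleene star freely. With a single agent whose relation is total on $\{1,\dots,m\}$, world $j$ survives the update by the word encoding an assignment iff that assignment satisfies $c_j$; the matrix is then checked by $\bigwedge_{j\in[m]} \hat K_1 p_j$. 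This keeps the model polynomial in size while the formula remains star-free. To repair your argument you need a model of this flavour: polynomially many worlds whose survival after the observation word encodes satisfaction of the matrix, together with a designated world whose expectation admits every letter as a prefix at each stage.
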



\noindent\textbf{Model checking for $\starfree$ $\existential$ and $\word$ fragment of $\POL$:} While Theorems~\ref{theorem:existential} and \ref{theorem:starfree} proved the $\PSPACE$-hardness of the model checking for the $\existential$ fragment and the $\starfree$  fragment of $\POL$, respectively, 
if we consider the $\starfree$ $\existential$ fragment then we can show that the model checking is $\NP$-complete.  Finally, we also prove that the model checking for the $\word$ fragment is in $P$.

\begin{restatable}{theorem}{starfreeexistNPC}
	The model checking problem for the $\starfree$ $\existential$ fragment of $\POL$ is $\NP$-complete.\label{theorem:starfreeexist}
\end{restatable}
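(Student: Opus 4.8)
The plan is to establish membership in $\NP$ and $\NP$-hardness separately.

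\textbf{Membership in $\NP$.} For a $\starfree\existential$ formula $\phi$, every observation expression $\pi$ occurring in $\phi$ is star-free, hence $\LL(\pi)$ is a finite set of words, and moreover every word in $\LL(\pi)$ has length at most $|\pi|$. The key structural observation is that evaluating $\phi$ on $(\M,s)$ amounts to searching for a single ``certificate tree'': each $\ldiamondarg{\pi}\psi$ subformula is a request to pick one word $w\in\LL(\pi)$ (equivalently, one accepting path of length $\le|\pi|$ through the NFA for $\pi$), and each $\hat K_i\psi$ subformula is a request to pick one $\sim_i$-successor world. Since $\phi$ is in NNF and purely existential (no $K_i$, no $[\pi]$), there is no universal branching to contend with: a satisfying run is witnessed by a mapping that assigns, to each existential-modality node along each branch of the parse tree, a choice (a word, resp.\ a world) so that the propositional literals at the leaves hold in the correspondingly updated pointed model. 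The nondeterministic algorithm guesses, for each $\ldiamondarg{\pi}$ node, a word $w$ of length $\le|\pi|$ (polynomially bounded), computes the updated model $\M|_w$ — which, as noted in the proof of Theorem~\ref{theorem:full}, differs from $\M$ only in the initial-state sets of the NFAs and is computable in polynomial time and has polynomial size — and for each $\hat K_i$ node guesses a successor world; it then recurses. The total number of guessed objects is bounded by the number of modality occurrences in $\phi$ times the polynomial bounds, so the certificate has polynomial size, and verifying it (checking prefixes, computing residues, checking literals) is polynomial. Hence the problem is in $\NP$. One subtlety to address: when recursing into $\M|_w$, the expected-observation expressions are residues $\pi'\regdiv w$ which could a priori blow up; I will note that we never iterate nestedly in a way that compounds this, since each branch of the parse tree has length $\le|\phi|$ and each residue step is polynomial, so all intermediate expressions stay polynomial in the input size — this bookkeeping is the one place where I must be careful.

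\textbf{$\NP$-hardness.} The plan is a reduction from a standard $\NP$-complete problem — propositional satisfiability (SAT), or equivalently the nonemptiness-of-intersection flavor already implicit in the machinery. Given a CNF formula $\chi$ over variables $x_1,\dots,x_n$, I would build a small epistemic expectation model together with a $\starfree\existential$ formula whose truth encodes a satisfying assignment: intuitively, use a sequence of $n$ existential observation modalities over a two-letter alphabet, where the $k$-th choice of letter (from $\{t_k, f_k\}$, say, or a shared pair $\{0,1\}$ with position tracked by the expectation expressions) corresponds to setting $x_k$ true or false, and arrange the expected-observation expressions at the worlds so that after reading a full assignment word the surviving worlds are exactly those encoding clauses not yet satisfied; then a closing formula asserts (via $\hat K_i$ or a propositional literal) that no unsatisfied-clause world survives. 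Alternatively — and this is likely cleaner — reduce directly from the nonemptiness of intersection of the languages $\LL(\pi_1),\dots,\LL(\pi_m)$ of $m$ star-free expressions, which is $\NP$-hard, by encoding $\bigcap_j \LL(\pi_j)\ne\emptyset$ as $\M,s\models\ldiamondarg{\pi_1}\cdots$ combined with residue checks; I would pick whichever encoding makes the expected-observation function cleanest.

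\textbf{Main obstacle.} The hardness direction is the real work: I need a gadget in which star-free observation expressions, which are just finite sets of bounded-length words, nonetheless force an exponential-size search that a polynomial-size model plus formula can express — the leverage must come from the combinatorics of which worlds survive simultaneously under a commonly-read word, since there is no Kleene star and no universal modality to exploit. Designing the expectation expressions at each world so that the conjunction-of-clauses semantics of SAT is mirrored by simultaneous survival, while keeping every $\pi$ star-free and polynomially sized, is the delicate step; the $\NP$ upper bound, by contrast, is essentially a matter of correctly bounding the certificate and invoking the polynomial-size-update observation from Theorem~\ref{theorem:full}.
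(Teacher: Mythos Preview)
Your $\NP$ upper bound is correct and is essentially the paper's argument: guess a short witness word (length $\le |\pi|$) at each $\ldiamondarg{\pi}$ node and a successor world at each $\hat K_i$ node, then verify in polynomial time. One caveat: your claim that ``each residue step is polynomial, so all intermediate expressions stay polynomial'' is not safe under the \emph{syntactic} residue of Definition~\ref{def:residue} (the $\pi_1\cdot\pi_2$ case branches, so iterated residues can blow up). You must commit to the NFA representation you allude to earlier, where residuation just moves the set of current states; then the bound is immediate. The paper's $\GetSetNP$ takes exactly this route, with $\Exp$ stored as NFAs.

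Your hardness sketch, however, has a real gap. In your primary SAT gadget you let the surviving worlds encode the \emph{unsatisfied} clauses and then want the closing formula to assert that ``no unsatisfied-clause world survives.'' But in the $\existential$ fragment you cannot express this: ``world $j$ is gone'' is $\lnot \hat K_1 p_j \equiv K_1 \lnot p_j$, a universal modality, and a bare propositional literal at the pointed world cannot detect whether other worlds survived. The fix is to flip the polarity: arrange that world $j$ survives exactly when clause $c_j$ \emph{is} satisfied by the guessed assignment, and close with the purely existential conjunction $\bigwedge_{j} \hat K_1 p_j$ (``every clause world is still reachable''). This is precisely the paper's reduction: alphabet $\{a_i,a'_i\}_{i\le n}$, formula $\ldiamondarg{a_1+a'_1}\cdots\ldiamondarg{a_n+a'_n}\bigwedge_{j}\hat K_1 p_j$, and $\Exp(j)$ chosen so that an assignment-word is a prefix of some expected observation at $j$ iff it satisfies $c_j$ (the star-free restriction applies only to the $\pi$'s in the formula, not to $\Exp$). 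Your alternative route via intersection non-emptiness of star-free expressions is viable in principle, but once you spell out how to test simultaneous membership inside $\POL$ --- namely, one world per language and a conjunction of $\hat K$-atoms after a common observation --- it collapses to the same construction.
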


\begin{restatable}{theorem}{wordP}\label{theorem:word}
Model checking for the $\word$ fragment is in $\PTime$.
\end{restatable}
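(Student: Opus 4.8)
The plan is to give a straightforward recursive evaluation procedure mirroring the truth definition, and argue that on $\word$-fragment formulas it runs in polynomial time. The key structural observation is that in the $\word$ fragment every modality $[\pi]\psi$ (equivalently $\ldiaarg{\pi}\psi$) has $\pi$ equal to a single word $w\in\Sigma^*$, and $|w|\le|\pi|\le|\varphi|$. Hence $\LL(\pi)=\{w\}$ is a singleton, and the semantic clause collapses to: $\M,s\models[w]\psi$ iff either $w\notin\prefixes(\Exp(s))$, or $w\in\prefixes(\Exp(s))$ and $\M|_w,s\models\psi$. So unlike the general case there is no quantification over an exponential set of candidate observations — there is exactly one.

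The steps I would carry out, in order. First, recall from the preliminaries that for an observation expression $\sigma$ (stored as an NFA, itself polynomial in $|\sigma|$) and a word $w$, the residue $\sigma\regdiv w$ is representable as the NFA for $\sigma$ together with the subset of states reachable on reading $w$, and this representation is computable in time polynomial in $|\sigma|$ and $|w|$; moreover emptiness of $\LL(\sigma\regdiv w)$ (i.e., the test $w\in\prefixes(\sigma)$) is decidable in polynomial time by NFA reachability. Second, define $\Decide(\M,s,\varphi)$ by recursion on $\varphi$: the Boolean cases $\top,p,\neg\psi,\psi\wedge\psi'$ are immediate; for $K_i\psi$, iterate over all $t$ with $s\sim_i t$ (at most $|S|$ of them) and recurse; for $[w]\psi$, compute the updated model $\M|_w$ by replacing, for every world $t\in S$, $\Exp(t)$ by $\Exp(t)\regdiv w$ and deleting worlds $t$ with $\LL(\Exp(t)\regdiv w)=\emptyset$ — if $s$ survives, recurse on $(\M|_w,s,\psi)$, otherwise return $\T$ vacuously. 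Third, bound the running time: each update $\M|_w$ costs $\mathrm{poly}(|\M|,|w|)\le\mathrm{poly}(|\M|,|\varphi|)$, and crucially $|\M|_w|\le|\M|$ because the update only shrinks the world set and leaves the NFA of each $\Exp(t)$ unchanged (only the initial-state set moves) — so model sizes never blow up along the recursion. Then a standard accounting shows the recursion tree has at most $\mathrm{poly}(|\varphi|)$ many nodes when we memoize, or more simply: the recursion depth is $\le|\varphi|$ and at each level the branching over $K_i$-successors contributes a factor $|S|$, but since each subformula is evaluated on a model of size $\le|\M|$ and there are $\le|\varphi|$ subformulas and $\le|S|$ worlds, evaluating the whole formula on all (model, world) pairs bottom-up takes time $\mathrm{poly}(|\M|,|\varphi|)$. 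I would phrase the final bound via the bottom-up/memoized version to keep it clean: there are at most $|\varphi|$ distinct subformulas, and for each we only ever need its truth value at pairs $(\M|_w,t)$ where $w$ ranges over the (at most $|\varphi|$) word-prefixes appearing along the modal path to that subformula and $t\in S$; all such values are computed in polynomial time.

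The main obstacle — really the only subtlety — is making the complexity bound airtight rather than hand-wavy: a naive recursion that re-expands $[w]\psi$ could in principle nest updates, and one must check that nested updates compose harmlessly, i.e. $(\M|_{w})|_{w'}=\M|_{ww'}$ up to the representation used, so that the model argument to every recursive call is still of the form $\M|_u$ with $|u|\le|\varphi|$ and of size $\le|\M|$. Once that composition fact is noted (it follows from $\sigma\regdiv w\regdiv w' = \sigma\regdiv ww'$, already recorded in Definition~\ref{def:residue}), the polynomial bound is immediate and the theorem follows.
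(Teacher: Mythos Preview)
Your proposal is correct and follows essentially the same approach as the paper: a recursive evaluation that, at the observation modality, exploits that $\LL(\pi)=\{w\}$ is a singleton so one simply computes $\M|_w$ (via polynomial-time residues) and recurses. The paper packages this as a set-returning labelling procedure $\GetSet$ rather than a per-point $\Decide$, but this is cosmetic; your explicit discussion of why model sizes do not grow under nested updates and of the composition $(\M|_w)|_{w'}=\M|_{ww'}$ is in fact more careful about the complexity accounting than the paper itself.
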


\section{Application}\label{app}

Let us consider an automatic farming drone that is moving in a field represented as a grid (see Figure~\ref{figure:field}). Two agents $a$ and~$b$ help the farming drone. The system is adaptive so the global behaviour is not hard-coded but learned. We suppose that the drone moves on a grid and agents $a$ and $b$ may observe one of the four directions: $\Act := \set{\obsright, \obsleft, \obsup, \obsdown}.$ For instance, observing $\obsleft$ means that the drone moves one-step left. For this example, we suppose that agent $a$ has learned that there are three possible expectations for the drone:
\begin{enumerate}
	\item the drone may go up-right searching for water, but the drone can make up to one wrong direction ($\obsdown$ or $\obsleft$). The corresponding set of expectations is captured by the regular expression
	$\expwater$ where $\ep$ stands for the empty word regular expression.
	\item the drone may go down-left searching for power supply, but the drone can make up to one wrong direction ($\obsup$ or $\obsright$). The corresponding set of expectations is captured by the regular expression:
	$\exppower$.
	\item the drone is patrolling making clockwise squares. The expectation is:
	$\exppatrol$.
\end{enumerate}

The regular expressions may be learned by the agents after observing several executions (see for instance \cite{DBLP:conf/birthday/BalcazarDG97}) or might be computed by planning techniques \cite{DBLP:conf/aips/BonetPG09}.
Agent $b$ has more information and knows that the behaviour of the drone would include either searching for water or power supply. Agent $a$ is programmed so that if she knows that the drone is searching for water ($K_a water$) then she will turn on the valve, and if she knows that the drone is searching for power  ($K_a power$), she will prepare the power supply. Agent $b$ is programmed in the same way.
The model $\modelM$, depicted in Figure~\ref{figure:motivationalexamplekripkemodel}, can be obtained by techniques described in~\cite{van2014hidden} (they use a mechanism from $\DEL$ for constructing the epistemic expectation model, by assigning the expectations at each world).

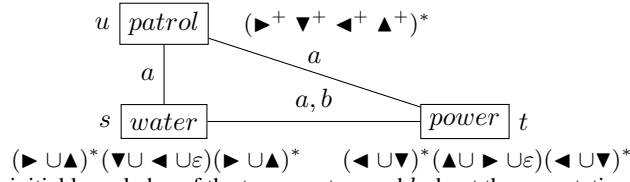
\begin{figure}
	\begin{center}
		\begin{tikzpicture}[yscale=1.3]
			\node[world] (s) {$water$};
			\node at (-0.1, -0.4) {\small $\expwater$};
			\node[world] (t) at (4, 0) {$power$};
			\node at (4.3, -0.4) {\small $\exppower$};
			\node[world] (u) at (0, 1) {$patrol$};
			\node at (2.3, 1) {\small $\exppatrol$};
			\node[left = 0mm of s] {$s$};
			\node[right = 0mm of t] {$t$};
			\node[left = 0mm of u] {$u$};
			\draw (s) edge node[above] {$a,b$} (t);
			\draw (s) edge node[left] {$a$} (u);
			\draw (t) edge node[above] {$a$} (u);
		\end{tikzpicture}
	\end{center}
	\vspace{-5mm}
\caption{Model describing the initial knowledge of the two agents~$a$ and $b$ about the expectation of the automatic farming drone.\label{figure:motivationalexamplekripkemodel}}
\end{figure}


\begin{figure}
	\begin{center}
		\newcommand{\sizefield}{7}
		\begin{tikzpicture}[scale=0.3]
			\foreach \x in {0, 1, ..., \sizefield} {
				\draw (\x, 0) -- (\x, \sizefield);
				\draw (0, \x) -- (\sizefield, \x);
			}
			\node at (0.5, 0.5) {\includegraphics[width=0.3cm]{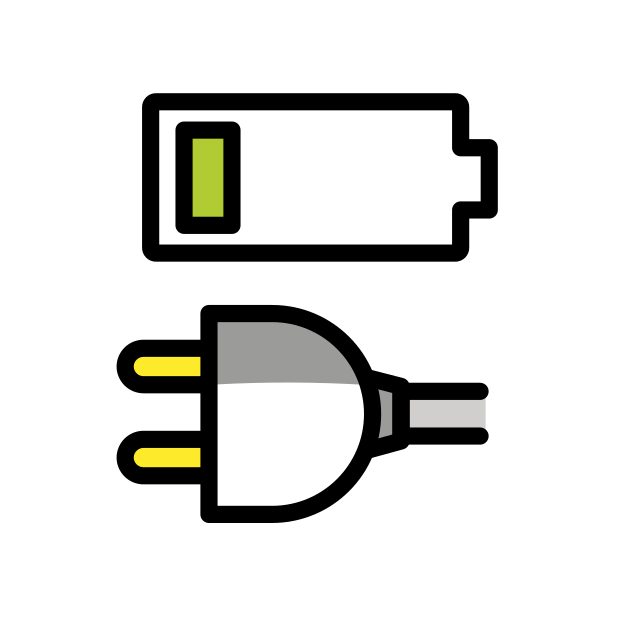}};
			\node at (6.5, 6.5) {\includegraphics[width=0.3cm]{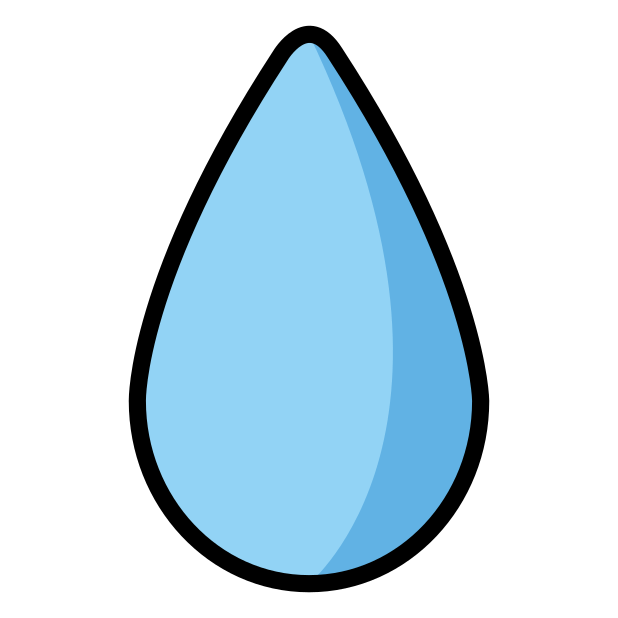}};
			\node at (3.5, 3.5) {\includegraphics[width=0.3cm]{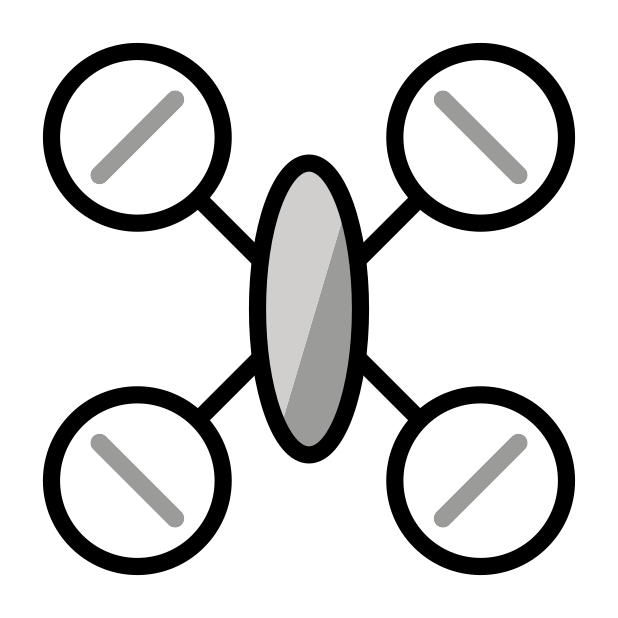}};
		\end{tikzpicture}
	\end{center}
	\vspace{-3mm}
\caption{Field and an automatic farming drone.\label{figure:field}}
\end{figure}

\newcommand{\fragmentexample}[1]{\ensuremath{#1} fragment}

Now, verification tasks related to epistemic planning (e.g., verifying whether $K_a water$, $K_b water$,
 $K_a power$, or $K_b power$ is true after some observations) reduce to the $\POL$ model checking problem. Let us now discuss the expressivity of the fragments: $\word$, $\existential$, $\starfree-\existential$ and $\starfree$.
 
 In the $\word$ fragment, words are fixed sequences of observations. The fragment thus enables to write formulas of the form $\ldiaarg{w}\phi$, meaning that $\phi$ holds after the sequence $w$ of observations (that can be considered as the observations produced by the plan executed by the system). Thus, this fragment enables to write formulas to verify properties after the execution of a plan.
 
 \begin{example}[verification of a plan, \fragmentexample{\word}]
 	Does agent $a$ know that the drone is searching for water after the sequence $\obsright  \obsright \obsright$? 
	$$\M, s \models \ldiaarg{\obsright  \obsright \obsright} K_a water$$
 \end{example}
 
 Epistemic planning is the general problem of verifying whether there exists a plan leading to a state satisfying a given epistemic formula. In our setting, it can be expressed by a formula of the form $\ldiaarg{\pi}\phi$ where $\pi$ denotes the plan search space (more precisely the search space of sequences of observations produced by a plan).
 
 \begin{example}[epistemic planning, \fragmentexample{\existential}]
 Does there exist a plan for the drone such that agent $b$ would know that the drone is searching for water while agent $a$ would still consider patrolling a possibility?
	$$\M, s \models \ldiaarg{(\obsright \union \obsdown \union \obsleft \union \obsup)^*} (K_b water \land \hat K_a patrolling)$$	
 \end{example}
 
In planning (and also in epistemic planning), we may ask for the existence of a plan of bounded length, e.g., less than 4 actions. The \fragmentexample{\starfree \existential} is sufficiently expressive to tackle the so-called \emph{bounded} epistemic planning. 
			
\begin{example}[bounded epistemic planning, $\starfree$ \fragmentexample{\existential}] Does there exist a sequence of at most 4 moves such that agent $b$ would know that the drone is searching for water while agent $a$ would still consider patrolling a possibility?
	$$\M, s \models \ldiaarg{(\obsright \union \obsdown \union \obsleft \union \obsup \union \epsilon)^4} (K_b water \land \hat K_a patrolling)$$
\end{example}

Interestingly, the \fragmentexample{\starfree} and the full language are able to express properties, mixing existence and non-existence of plans, in respectively the bounded and unbounded cases.

\begin{example}[\fragmentexample{\starfree}]
Agent $a$ would not gain the knowledge that the drone will search for water with less than or equal to 2 movements but it is possible with 3 movements:
		\begin{align*}
\M, s \models & [(\obsright \union \obsdown \union \obsleft \union \obsup)^2] \lnot K_a water ] 
 \land  \\ &\ldiaarg{(\obsright \union \obsdown \union \obsleft \union \obsup)^3} K_a water
		\end{align*}
\end{example}

\begin{example}[full language]
	It is impossible for the agent $a$ to know that the drone is searching for water with only down and left movements but there is a plan if all movements are allowed:
		\begin{align*}
		\M, s \models & [(\obsdown \union \obsleft)^*] \lnot K_a water ] 
		\land  \\ &\ldiaarg{(\obsright \union \obsdown \union \obsleft \union \obsup)^*} K_a water
	\end{align*}
\end{example}


\subsection{On Implementation}
The model checking for the $\word$ fragment can be implemented in poly-time with a bottom-up traversal of
the parse tree of the formula, as for CTL \cite[Section 6.4]{DBLP:books/daglib/0020348}. In this subsection, we explain how to provide an efficient implementation for the $\starfree-\existential$ fragment of POL model checking by providing a reduction to SAT. We explain how to check that $\M, w \models \ldiaarg{\pi}\phi$, where $\pi$ is star-free and $\phi$ is an epistemic formula.  In other words, we aim at checking whether there is a guessed word that belongs to the language of $\pi$ such that $w$ survives the announcement of that word and $\phi$ holds in the updated model. As $\pi$ is star-free, the guessed word is bounded by the size of $\pi$. The idea is to introduce propositional variables to model the Boolean values of the following statements: (i) the $t$-th letter of the guessed word is equal to $a$, (ii) a given automaton $A$ is in state $q$ after having read the  first $t$ letters of the guessed word, and, (iii) a subformula of the formula $\phi$ to check is true at a given world $u$. The last type of statements are combined in the spirit of the Tseitin transformation \cite[p. 91]{DBLP:books/daglib/0029942}.  We leave the other cases for future work.
\par W.l.o.g. we suppose that the guessed word is of length~$k$. We also suppose that all the automata are deterministic. This costs an exponential time in principle, but we can rely on an efficient minimization algorithm to obtain small deterministic automata in practical cases.

\newcommand{\existenceunicity}[1]{\#(#1)=1}
\newcommand\propositionletter[2]{p_{#1, #2}}
\newcommand\propositiontruth[2]{t_{#1, #2}}
\newcommand\propositionautomatonstate[3]{a_{#1, #2, #3}}
\newcommand{\lequiv}{\leftrightarrow}

We now explain how to define a Boolean formula $tr(M, w, \ldiaarg{\pi}\phi)$ such that $\M, w \models \ldiaarg{\pi}\phi$ iff $tr(M, w, \ldiaarg{\pi}\phi)$ is a satisfiable Boolean formula.
To this end, we introduce several propositional variables:
\begin{itemize}
    \item $\propositionletter t a$: the $t$-th letter of the guessed word is $a$
    \item $\propositiontruth u \psi$: subformula $\psi$ (of $\phi$) holds in world $u$ in the updated model
    \item $\propositionautomatonstate A t q$: in the automaton $A$, the state after having read $t$ letters in the guessed word is $q$. Note that $A$ can denote either an automaton for $\pi$ or that for any prefix language $\Exp(u)$.
\end{itemize}

Given a finite set $P$ of propositional variables, we write $\existenceunicity P$ for a Boolean formula saying that exactly one propositional variable in $P$ is true.
We construct the following Boolean formula 
$$\propositiontruth w \phi \land guessedWord \land good \land surv \land rules $$
which is satisfiable iff $\M, w \models \ldiaarg{\pi}\phi$. 
The first part $\propositiontruth w \phi$ enforces that the formula $\phi$ is true in $w$ after the announcement of the guessed word. 

The second part $guessedWord$ says that the guessed word is uniquely determined by the propositions $\propositionletter t a$. More precisely, that part is:
$\bigwedge_{t=1..k}\existenceunicity{\set{\propositionletter t a \mid a \in \Sigma}}$.

In other words, it means that there is a path in the non-deterministic automaton $A$ corresponding to $\pi$, starting from the initial state $q_0$,  following the guessed word, and leading to a final state.    Given $A$, the automaton for $\pi$, the formula $good$ is the conjunction of:
\begin{itemize}
    \item $\propositionautomatonstate A 0 {q_0}$;
    \item $\bigwedge_{t=0..k} \existenceunicity{\set{\propositionautomatonstate A t q \suchthat q \in Q}}$;
    \item $\bigwedge_{t=0..k-1, q, a} \propositionautomatonstate A t q \land \propositionletter t a  \rightarrow
    \propositionletter A {t+1} {\delta(q, a)}$
    \item $\bigvee_{q \in F} \propositionautomatonstate A k q$.
\end{itemize}

The part $surv$ says that the guessed word belongs to the prefix language of $\Exp(s)$. That formula is similar to $good$ but for the automaton corresponding to the prefix language of $\Exp(s)$.
Finally, $rules$ is a formula that mimics the semantics of $\propositiontruth w \phi$ in the same spirit as Tseitin transformation.
Formula $rules$ is the conjunction of:
\begin{itemize}
    \item $\propositiontruth u p \lequiv \top$ if $u \models p$;
    \item $\propositiontruth u p \lequiv \bot$ if $u \models p$;
    \item $\propositiontruth u {\lnot \psi} \lequiv \lnot \propositiontruth u {\psi}$;
    \item  $\propositiontruth u {\psi_1 \land \psi_2} \lequiv  (\propositiontruth u  {\psi_1} \land \propositiontruth u  {\psi_2})$;
    \item  $\propositiontruth u {K_i \psi} \lequiv \bigwedge_{v \mid u \rightarrow_i v} surv(v) \rightarrow \propositiontruth u  \psi$
\end{itemize}

\noindent where $surv(v)$ is a propositional variable saying that world $v$ survives the announcement of the guessed word. The variable $surv(v)$ alone is not sufficient. It is accompanied by a collection of clauses in the same spirit as $surv$ but with the automaton of the prefixes of $\Exp(v)$.

We have $\M, w \models \ldiaarg{\pi}\phi$ iff $tr(M, w, \ldiaarg{\pi}\phi)$ is a satisfiable Boolean formula. Moreover, the truth values of propositions $\propositionletter t a$ in a valuation satisfying $tr(M, w, \ldiaarg{\pi}\phi)$ gives a plan, i.e. the guessed word in the language of $\pi$ such that $\phi$ holds after executing that word from $\M, w$.
The full case of $\starfree-\existential$ fragment follows the same idea but is cumbersome.

The reduction to SAT has been implemented in \texttt{Python3} using the library \texttt{pySAT} (with the SAT solver \texttt{Glucose}) and the library \texttt{automata-lib}. To get an idea, the running time for checking Example 3 is around 10ms.\footnote{The accompanying codes can be found in the following link: https://github.com/francoisschwarzentruber/polmc}

\section{Related work}\label{rel}


{\bf Dynamic epistemic reasoning} 
The model checking of standard epistemic logic ($\EL$) is $\PTIME$-complete \cite{DBLP:conf/aiml/Schnoebelen02}.
Public Observation Logic ($\POL$) is quite similar to Public announcement logic ($\PAL$) \cite{DBLP:journals/synthese/Plaza07}. When public announcements are performed, the number of possible worlds reduces, making the model checking of $\PAL$ still in $\PTIME$ \cite{van2011logical} as for standard epistemic logic. When actions can be private, 
the model checking becomes $\PSPACE$-complete for $\DEL$ with action models \cite{DBLP:conf/tark/AucherS13}.

In $\PAL$, a possible world is equipped with a valuation, while in $\POL$ it is also equipped with a regular expression denoting the expectation in that world. In $\PAL$, the public announcement is fully specified and its effect is deterministic. In $\POL$, we may reason on sets of possible observations represented by regular expressions $\pi$. 
When these sets are singletons, we again obtain a $\PTIME$ upper bound (Theorem~\ref{theorem:word}). In this sense, $\POL$ is close to Arbitrary $\PAL$ ($\APAL$) \cite{DBLP:conf/aiml/FrenchD08} whose model checking is also $\PSPACE$-complete \cite{DBLP:journals/japll/AgotnesBDS10}. In $\APAL$, any epistemic formula can be announced: there are no expectations. However, in $\POL$, we have to reason about the constraints between the possible expectations, and the set of 
observations (given by $\pi$).  Our contribution can be reformulated as follows: we prove that (i) reasoning about these constraints can still be done in $\PSPACE$, and, (ii) this reasoning is sufficiently involved for the model checking to be $\PSPACE$-hard. 

In $\POL$, regular expressions are used to represent sets of observations, 
while van Benthem et al. \cite{DBLP:journals/iandc/BenthemEK06} used regular expressions (actually, programs of Propositional dynamic logic ($\PDL$) \cite{DBLP:journals/jcss/FischerL79}) to denote epistemic relations. Charrier et al. \cite{DBLP:journals/logcom/CharrierPS19} considered a logic for reasoning about protocols where actions are public announcements and not abstract observations as in POL: in this sense, POL is more general.






\vspace{.5em}

\noindent {\bf Epistemic temporal reasoning} It is natural to describe computational behaviours with regular expressions. Finite-state controllers, i.e., automata are used to describe policies in planning \cite{DBLP:conf/aips/BonetPG09}. Interestingly, Lomuscio and Michaliszyn \cite{DBLP:conf/kr/LomuscioM16} studied an epistemic logic where formulas are evaluated on intervals and the language provides Allen's operators on intervals: in their setting, the model is an interpreted system, and a propositional variable $p$ is true in an interval $I$ if the trace of $I$ matches a given regular expression associated to $p$. In contrast, $\POL$ is not based on an already set-up model but relies on updates in a model. Bozzelli et al. \cite{DBLP:conf/sefm/BozzelliMMP17} studied the complexity of the model checking of that logic depending on the restrictions on the allowed set of Allen's operators. Their framework is similar to ours because it relies on regular expressions but the approach is orthogonal to model updates and hence, to epistemic planning.

\vspace{.5em}

\noindent {\bf Epistemic planning}
As far as we know, epistemic planning frameworks (based on DEL \cite{DBLP:journals/ai/BolanderCPS20}, or the so-called MEP for Multi-agent Epistemic Planning \cite{DBLP:journals/ai/MuiseBFMMPS22}) all
provide a mechanism for reasoning about preconditions and effects of actions. Expectations about others or about the world are not dealt with. However,
Saffidine et al. \cite{DBLP:conf/aaai/SaffidineSZ18} propose a collaborative setup for epistemic planning where each agent executes its own knowledge-based policy/program (KBP) while agents commonly know all the KBPs that are being executed, meaning that agents expect that the other agents follow their own KBP. On the contrary, in $\POL$, observations are public but expectations are in general not commonly known. Reasoning about some epistemic properties that are true after the execution of any kind of KBPs is undecidable, but is $\PSPACE$-complete for star-free KBPs. The complexity is high for different reasons:  the initial model is represented symbolically; observations are not already public, and KBPs may contain tests. 

\vspace{.5em}

\noindent {\bf Strategic reasoning} Usually in logics for strategic reasoning (e.g., alternating-time temporal logic \cite{DBLP:journals/jacm/AlurHK02}, and strategy logic \cite{DBLP:journals/iandc/ChatterjeeHP10}), agents do not have expectations: an agent may consider all possible strategies for the others. Recently, Belnardinelli et al. \cite{DBLP:conf/ijcai/BelardinelliKLM21} propose a variant of strategy logic (SL) where a player may know completely the strategy of another player. In contrast, in $\POL$ agents may have partial information about the expectations. In $\POL$, agents also have higher-order knowledge about these expectations. In SL, strategies are abstract objects in the logical language whereas in $\POL$, observations are represented as composite structures that the agents can reason about, similar to the work on games and strategies presented in \cite{gr12}. 
In this sense, $\POL$ can be seen as $\EL$ extended with $\PDL$ operators. 

\section{Conclusion}\label{concl}

In this paper, we showed that the model checking for $\POL$ is $\PSPACE$-complete. Such complexity studies were left open in \cite{van2014hidden}. We also identified more tractable fragments (see Figure~\ref{figure:results}) of $\POL$. 
Finally, we discussed the applicability of our study in verifying various features of interactive systems related to epistemic planning. A discussion on implementation is also provided.

We leave the investigations on model checking for $\EPL$, an extension of $\POL$, also proposed in \cite{van2014hidden}, 
for future work. We also aim to study the satisfiability problems of $\POL$ and $\EPL$ 
by adapting  the techniques from \cite{DBLP:conf/tark/AucherS13,DBLP:conf/atal/Lutz06}.

Many interesting features of such interactive systems remain to be investigated : 
private observations, like in DEL with action models \cite{DEL}; dynamic aspects (e.g., changing expectations); richer languages of expectations (e.g., context-free grammars for expectations), among others. Symbolic model checking can be considered as well following the trends of \cite{DBLP:journals/logcom/BenthemEGS18} and  \cite{DBLP:journals/logcom/CharrierPS19}.

This paper also opens up a research avenue for developing variants and extensions for reasoning about expectations and observations that can be expressive enough with reasonable complexities for the model checking problem.

To sum up, POL mixes epistemic logic and language theory for modelling mechanisms of social intelligent agents, and the current investigations on model checking set it up as a useful tool in building social software for AI. 

\bibliographystyle{unsrt}
\bibliography{bibliography}

\newpage 
\appendix

\begin{center}
   \Huge{\textbf{Appendix}}
\end{center}

\ 


\section{$\POL$ Model Checking is in $\PSPACE$}

In this section, we prove that $\POL$ model-checking is in $\PSPACE$. We prove it by showing that the algorithm  $\DecidePSPACE$ (presented in Section~\ref{results}), takes as  input a $\POL$ model $\M = \ldiamondarg{S,\sim,V,\Exp}$, an initial starting world  $s\in S$, and a $\POL$ formula $\varphi$ and returns $\T$ if and only if $\M, s \models \phi$, and at the same time, the algorithm  $\DecidePSPACE$ runs in polynomial space.

A crucial step in $\DecidePSPACE$ is when (in line~\ref{ln:oracle})  it 
uses an oracle to check if $\M' = \M|_w$ for some word $w \in \LL(\pi)$. 
So to prove that the algorithm $\DecidePSPACE$ we need to prove the existence of a polynomial space subroutine to check if $\M' = \M|_w$ for some word $w \in \LL(\pi)$. For this we the Algorithm~\ref{algoMAIN:algofororacle} which is a non-deterministic polynomial space algorithm to check the same. And using Savitch's theorem we can conclude that a deterministic polynomial space algorithm must also exist. 
We start by proving the correctness and complexity of Algorithm~\ref{algoMAIN:algofororacle} (in Section~\ref{sec:proofnd}) and then using this we present the proof of correctness and complexity of $\DecidePSPACE$ in Section~\ref{sec:proofmain}.


\subsection{Correctness and Complexity of Algorithm~\ref{algoMAIN:algofororacle}}\label{sec:proofnd}

Before we prove the correctness of $\DecidePSPACE$ we need to first prove that there is a PSPACE algorithm (oracle) for checking if $\M' = \M|_w$ for some word $w \in \LL(\pi)$. We will first prove that Algorithm~\ref{algoMAIN:algofororacle} is a non-deterministic algorithm for  checking if $\M' = \M|_w$ for some word $w \in \LL(\pi)$ that takes polynomial space. For that we need to understand the following: for these two models $\M$ and $\M'$, what is the length of the smallest $w\in \LL(\pi)$ such that $\M' = \M|_w$? 
Lemma~\ref{theorem:ExponentialWitness} takes care of this query. We start with Observation~\ref{observation:prefixstatequivalence} which is required to show that the length of the smallest such $w$ is bounded by $2^{\pi}\times \Pi_{t \in \M} 2^{|\Exp(t)|}$.


\begin{restatable}{observation}{prefixstatequivalence}
For a finite model $\M = \langle S,\sim,V,\Exp\rangle$, a world $s\in S$, and for every pair of words $w, w'\in \Sigma^*$, if $w$ and $w'$ are simulated in the $DFA(\Exp(s)) = (Q_s, \Sigma, \delta_s, q^0_s, F_s)$, with both simulations ending in the state $q\in Q_s$, then $DFA(\Exp(s)\backslash w) = DFA(\Exp(s)\backslash w')$.
\label{observation:prefixstatequivalence}
\end{restatable}

\begin{proof}
From the definition of residue (Definition~\ref{def:residue}) we have 
$u\in \mathcal{L}(DFA(\Exp(s)\backslash w))$ iff $wu\in \mathcal{L}(DFA(\Exp(s)))$.  Note that,
\begin{align*}
     wu\in \mathcal{L}(DFA(\Exp(s)))&\mbox{ \textit{iff} } u\in \mathcal{L}(DFA(\Exp(s)\backslash w)) \hspace{12.0cm}\\
   &\mbox{ \textit{iff} }  wu\in \mathcal{L}(DFA(\Exp(s)))\\
    &\mbox{ \textit{iff} } \widehat{\delta_s}(q^0_s, wu)\in F_s\\
    &\mbox{ \textit{iff} }  \widehat{\delta_s}(\widehat{\delta_s}(q^0_s, w), u)\in F_s\\
    &\mbox{ \textit{iff} }  \widehat{\delta_s}(\widehat{\delta_s}(q^0_s, w'), u)\in F_s,
\end{align*}
the last if and only if holds as by assumption $\widehat{\delta_s}(q^0_s, w) = \widehat{\delta_s}(q^0_s, w')$.  Finally note that, 

\begin{align*}
     \widehat{\delta_s}(\widehat{\delta_s}(q^0_s, w'), u)\in F_s 
    \mbox{ \textit{iff} } &  w'u\in \mathcal{L}(DFA(\Exp(s)))\\
    \mbox{ \textit{iff} } & u\in \mathcal{L}(DFA(\Exp(s)\backslash w'))
\end{align*}

\end{proof}


Using Observation~\ref{observation:prefixstatequivalence} we can obtain an upper bound on the size of the set $\Gamma^\M = \{\M|_w\mid w\in\Sigma^*\}$.

\begin{restatable}{lemma}{NumberOfProjectedModels}\label{lem:firstnd}
Given a finite $\POL$ model $\M = \langle S,\sim,V,\Exp \rangle$, the size of $|\Gamma^\M|\leq\Pi_{t \in \M} 2^{|\Exp(t)|}$.
\label{lemma:NumberOfProjectedModels}
\end{restatable}
\begin{proof}
For any given world $s\in S$ and $DFA(\Exp(s)) = (Q_s, \Sigma, \delta_s, q^0_s, F_s)$, relation $Z^\M_s\subseteq \Sigma^* \times \Sigma^*$ is defined as:
$$
(w,u)\in Z^\M_s\mbox{ iff } \widehat{\delta_s}(q^0_s, w) = \widehat{\delta_s}(q^0_s, u)
$$
Clearly, $Z^\M_s$ is an equivalence relation, hence creates a partition over $\Sigma^*$. Therefore by Observation \ref{observation:prefixstatequivalence}, for any pair $(w,w')\in Z^\M_s$, $DFA(\Exp(s)\backslash w) = DFA(\Exp(s)\backslash w')$. In other words, any $w$ from a single partition $\big[[w]\big]_s$ over $\Sigma^*$ by $Z^\M_s$, will produce the same $DFA(Exp(s)\backslash w)$.
Therefore, number of partitions over $\Sigma^*$ by $Z^\M_s$ is at most the number of states in $DFA(\Exp(s))$, that is, $2^{|\Exp(s)|}$.

For the model $\M$, let $n = |S|$. Consider the following $n$-tuple $T_\M^w = (D^w_{s_1},\dots,D^w_{s_n})$ for all $w\in\Sigma^*$, where $D^w_{s_i} = DFA(Exp(s_i)\backslash w)$ for the world $s_i\in S$. Note that $|\Gamma^\M| = |\{T^w_\M \mid w\in\Sigma^* \}|$, because for every $w\in \Sigma^*$, $T^w_\M$ is the tuple enumerating the $Exp$ function of $\M|_w$ according to the worlds of $\M$ (Note that, if a world vanishes in $\M|_w$ for some $w$, the corresponding DFA will be of empty language). 

For each world $s_i$, the total number of $D^w_{s_i}$ possible is at most $2^{|\Exp(s_i)|}$, and hence the total number of such tuples possible is $\Pi_{t \in \M} 2^{|\Exp(t)|}$.
\end{proof}

Now using the Lemma~\ref{lem:firstnd} we prove Lemma~\ref{theorem:ExponentialWitness} that would be used to prove the correctness of the Algorithm~\ref{algoMAIN:algofororacle}.

\begin{lemma}
Given a $\POL$ model $\M=\langle S,\sim,V,\Exp\rangle$, a world $s\in S$ and a formula $\ldiamondarg{\pi}\psi$, $\M,s\vDash\ldiamondarg{\pi}\psi$ iff $\exists w\in \LL(\pi)$ of length at most $2^{|\pi|}\times \Pi_{t \in \M} 2^{|\Exp(t)|}$ such that $\M|_w,s\vDash\psi$ and the world $s$ survives in $\M_w$.
\label{theorem:ExponentialWitness} 

\end{lemma}
\begin{proof}
The $\Leftarrow$ direction is easy: if there exists a $w\in \LL(\pi)$ such that $\M|_w,s\vDash\psi$ then by definition $\M, s\vDash \ldiamondarg{\pi} \psi$. 

Now for the $\Rightarrow$ direction, consider the edge graph $G^\M(\Gamma^\M, E^\M)$ on vertex set $\Gamma^{\M}$ and and edge from vertex $\M|_u$  to vertex $\M|_{u'}$ is present if and only if there exists a $a\in \Sigma$ such that $\M|_{ua} = \M|_{u'}$. From Lemma~\ref{lemma:NumberOfProjectedModels} we know that the number of vertices in the graph is at most $\Pi_{t \in \M} 2^{|\Exp(t)|}$. 
Thus
it is easy to observe that for any two vertices $\M|_u, \M|_{u'} \in \Gamma^{\M}$ 
the set $\Delta_{\M|_u, \M|_{u'}}:= \{w\in \Sigma^* \mid \M|_{uw} = \M|_{u'}$\} is a regular language accepted by a DFA of size at most $\Pi_{t \in \M} 2^{|\Exp(t)|}$. 

Let $\M,s\vDash\ldiamondarg{\pi}\psi$. Then by definition there exists a $w_0\in \LL(\pi)$ such that $\M|_{w_0}, s\vDash \psi$. Note that $\M$ and $\M|_{w_0}$ are both vertices of the graph $G^{\M}$. So all the $\{w\in \LL(\pi) \mid \M|_{w} = \M|_{w_0}\}$ is nothing but the the set $\LL(\pi) \cap \Delta_{\M, \M|_{w_0}}$. So we know there is a $w \in \LL(\pi) \cap \Delta_{\M, \M|_{w_0}}$ of size at most $2^{\pi}\times \Pi_{t \in \M} 2^{|\Exp(t)|}$ and for that $w$, 
$\M|_w, s\vDash \psi$. 
\end{proof}

From Lemma~\ref{theorem:ExponentialWitness} we know that if $\M' = \M|_w$ for some $w\in \LL(\pi)$ there exists a $w_0\in \LL(\pi)$ with $|w_0|\leq 2^{|\pi|}\times \Pi_{t \in S} 2^{|\Exp(t)|}$ and $\M' = \M|_{w_0}$. Since Algorithm~\ref{algoMAIN:algofororacle} guesses a $w = \alpha_1 \dots, \alpha_j \dots$ (one letter at a time) of length at most $2^{|\pi|}\times \Pi_{t \in S} 2^{|\Exp(t)|}$ and $\M' = \M|_{w_0}$ and checks if $\M'=\M|_w$, from Lemma~\ref{theorem:ExponentialWitness} we see that the algorithm is correct. Note that Algorithm~\ref{algoMAIN:algofororacle} is a non-deterministic algorithm. The algorithm uses only polynomial space (in the size of $\M$), since at any point of time (say at the $j$th iteration of the \textbf{for} loop in Line~\ref{ln:oracleforstart}) the algorithm only have to update the model from $\M|_{\alpha_1 \dots \alpha_{j-1}}$ to  $\M|_{\alpha_1 \dots \alpha_{j}}$ which can be done using polynomial space. Note that the Algorithm~\ref{algoMAIN:algofororacle} does not have to remember the string $\alpha_1 \dots \alpha_j\dots $ which can be of size exponential.  So the Algorithm~\ref{algoMAIN:algofororacle} is a non-deterministic polynomial space algorithm.  Thus we have

\begin{theorem}\label{thm:nd}
Algorithm~\ref{algoMAIN:algofororacle} is a non-deterministic polynomial space algorithm that correctly checks if  $\M' = \M|_w$ for some $w\in \LL(\pi)$. 
\end{theorem}

By Savitch's Theorem \cite{DBLP:journals/jcss/Savitch70},
there is a deterministic polynomial space oracle for checking if $\M' = \M|_w$ for some $w\in \LL(\pi)$.  Thus we have 

\begin{theorem}\label{thm:det}
There is a deterministic polynomial space algorithm that correctly checks if  $\M' = \M|_w$ for some $w\in \LL(\pi)$. 
\end{theorem}

\subsection{Correctness and Complexity of $\DecidePSPACE$}\label{sec:proofmain}

Using the Theorem~\ref{thm:det} we now present the proof of correctness and complexity of $\DecidePSPACE$.

\begin{restatable}{lemma}{correctness}\label{thm:correctness}
$\DecidePSPACE(\M,s,\varphi)$ returns $\T$ iff $\M,s\vDash\varphi$.
\end{restatable}
\begin{proof}
We will prove $\DecidePSPACE(\M, s, \varphi)$ returns $\T$ iff $\M,s\vDash\varphi$ by induction on the size of $\varphi$.

\noindent \textbf{Base Case. }Consider $\varphi$ to be a proposition. 
$\M,s\vDash\varphi$ iff $\varphi\in V(s)$ iff $\DecidePSPACE(\M, s, \varphi)$ returns $\T$.

\noindent\textbf{Induction Hypothesis. } For any $\POL$ formula $|\psi|\leq m$, any finite model $\M$ and any world $s$, $\DecidePSPACE(\M, s, \psi)$ returns $\T$ iff $\M,s\vDash\psi$.

\noindent\textbf{Inductive Step. } We go case by case over the forms of $\varphi$. For all the cases except when $\varphi = \ldiamondarg{\pi}\psi$, the inductive step is trivial. So we focus on the crucial case when $\varphi = \ldiamondarg{\pi}\psi$.

By definition we know that $\M, s\vDash \ldiaarg \pi \psi$ iff there exists a $w\in \LL(\pi)$ such that $\M|_w, s\vDash \psi$.  In other words, $\M, s\vDash \ldiaarg \pi \psi$ iff there exists $\M'\in \Gamma^{\M}$ such that $\M' = \M|_w$ for some $w\in \LL(\pi)$ and the world $s$ survives and $\M', s\vDash \psi$. By induction hypothesis $\M', s\vDash \psi$ iff $\DecidePSPACE(\M',s, \psi)$ is $\T$.  In the \textbf{for} loop Lines~\ref{line:forallmodelstart} to \ref{line:forallmodelend} the algorithm goes over all $\M'\in \Gamma^{\M}$.  For each of the $\M'$ the algorithm in Line~\ref{line:oraclecall} calls the oracle (Algorithm~\ref{algoMAIN:algofororacle}) to check if $\M' = \M|_w$ for some $w\in \LL(\pi)$ and if the world survives calls $\DecidePSPACE(\M',s, \psi)$ recursively. Since we have already argued correctness of Algorithm~\ref{algoMAIN:algofororacle} by Theorem~\ref{thm:nd} , the correctness of the algorithm follows. 
\end{proof}

Now we move on to prove that $\DecidePSPACE$ uses polynomial amount of space. This (along with Lemma~\ref{thm:correctness} and Theorem~\ref{thm:det}) would prove that $\DecidePSPACE$ is in $\PSPACE$.

\begin{restatable}{lemma}{mcspace}\label{thm:space}
	$\DecidePSPACE$ uses polynomial space.
\end{restatable}
\begin{proof}
Since the algorithm is recursive, the formal argument (as in the proof of Lemma~\ref{thm:correctness}) should go via induction. It 
can be observed that in all the cases except when $\varphi = \ldiamondarg{\pi}\psi$, $\DecidePSPACE$ only uses a constant amount of space before making the recursive call. In the case when $\varphi = \ldiamondarg{\pi}\psi$, since the algorithm goes over all $\M'\in \Gamma^{\M}$(\textbf{for} loop from Line~\ref{line:forallmodelstart} to Line~\ref{line:forallmodelend}), the algorithm  will have to do some bookkeeping to keep a track on when $\M$ is being processed and to store the current $\M$. But since $\Gamma^{\M}$ has size exponential
(Lemma~\ref{lemma:NumberOfProjectedModels}) and since all $\M'\in \Gamma^{\M}$ can be represented in size polynomial in the size of $\M$ so it is possible to do the bookkeeping and tracking using only polynomial space. For any $\M'$ inside the \textbf{for} loop the only non-trivial thing to do is the call to the oracle in Line~\ref{line:oraclecall}. By Theorem~\ref{thm:det}, there exists an algorithm in $\PSPACE$ that given two models $\M$ and $\M'$ and regular expression $\pi$ checks if there exists $w\in \LL(\pi)$ such that $\M' = \M|_w$. This space can of course be reused for any iteration of the \textbf{for} loop in Line~\ref{line:forallmodelstart}. So $\DecidePSPACE$ uses at most polynomial space before making a recursive call and hence the total space used by the algorithm is polynomial. 
\end{proof}

By combining Lemma~\ref{thm:correctness} and \ref{thm:space} we have the following:

\begin{theorem}
$\POL$ model-checking is in $\PSPACE$.
\end{theorem}
\section{Model checking for $\POL$ is $\PSPACE$-hard}

As pointed out in Section~\ref{results} there are two sources for the model checking to be $\PSPACE$-hard: Kleene star in observation modalities as well as alternations in modalities (sequences of nested existential and universal modalities). 
We prove the $\PSPACE$-hardness of the model-checking against the $\existential$ fragment  and the $\starfree$ fragment of $\POL$ respectively.

\existentialLB*
\begin{proof}
\cite{DBLP:conf/focs/Kozen77} proved that the following problem,  called the intersection non-emptiness problem, is $\PSPACE$-complete: given a finite collection of DFAs $\automaton_1, \dots, \automaton_n$, decide whether $\LL(\automaton_1) \cap \dots \cap \LL(\automaton_n)  \neq \emptyset$.  Let us reduce this problem to the model checking for $\POL$. For that we construct the instance $\tr(\automaton_1, \dots, \automaton_n) = (\modelM, s_0, \phi)$ as:
\begin{itemize}
    \item $\M = (S, \sim, V, Exp)$ where 
    \begin{itemize}
    	\item     $S = \set{0,1, \dots, n} \cup \set{0', 1', \dots, (n-1)'}$.
    	\item $\sim_1 = \set{(i, i'), (i, i), (i', i'), (i', i)  \suchthat i=0, \dots, n-1}$,
    	$\sim_2 = \set{(i', i+1), (i', i'), (i+1, i+1), (i+1, i') \suchthat i=0, \dots, n-1}$,
    	\item $V(s) = V(s') = \emptyset$ for all $s \leq n-1$, $V(n) = \set{p}$.
    	\item     $Exp(0) = Exp(0') = \Sigma^*$ and $Exp(i) = Exp(i') = \automaton_i$ for all $i=1\dots n$.
    \end{itemize}
    \item $s_0 = 0$.
    \item $\phi = \ldiamondarg{\Sigma^*} (\hat K_1\hat K_2)^{n+1} p$.
\end{itemize}

The model $\M$ is a chain of worlds: starting with two worlds $0$, $0'$, labeled by an automaton for the universal language $\Sigma^*$. followed by worlds $1$, $1'$ labelled by automaton $\automaton_1$, followed by worlds $2$, $2'$ labelled by automaton $\automaton_2$, etc. It ends with worlds $n-1$, $(n-1)'$ labelled by automaton $\automaton_{n-1}$, followed by a world $n$ labelled by $\automaton_{n}$. Proposition $p$ is false in all worlds except $n$. The formula $\phi$ says that there exists a word $w$ such that $ (\hat K_1\hat K_2)^{n+1} p$ holds in $\M|_w, 0$. As $n$ should still be reachable, it means that the word must be in $\LL(\automaton_i)$ for all $i\in [n]$.
Now, $\tr(\automaton_1, \dots, \automaton_n)$ is computable in polynomial time in the size of
$(\automaton_1, \dots, \automaton_n)$.
Furthermore we have $\LL(\automaton_1) \cap \dots \cap \LL(\automaton_n)  \neq \emptyset$ iff $\modelM , s_0 \models \phi$.
\end{proof}

\ 

\starfreeLB*
\begin{proof}
    We shall prove by a reduction from TQBF. 
    Given a Quantified Boolean formula $\varphi = Q_1x_1\dots Q_nx_n \gamma$, where $\gamma$ is in CNF containing $n$ variables $\{x_1,\dots ,x_n \}$ and $m$ clauses $C = \{c_1,\dots ,c_m \}$ and $Q_i\in\{\exists, \forall\}$ for all $i\in [n]$,  we shall construct our reduction. Consider the translations $\tr(x_i) = a_i$, $\tr(\neg{x_i}) = a'_i$, $\tr(\exists x_i) = \ldiamondarg{a_i + a'_i}$ and $\tr(\forall x_i) = [a_i + a'_i]$ for all $i\in [n]$. Now we present the model checking instance that we construct from $\varphi$.
    \begin{itemize}
        \item  Alphabet: $\Sigma_\varphi = \cup_{i\in [n]}\{\tr(x_i), \tr(\neg{ x_i})\}$.
        \item Model: $\M_\varphi = \langle S_\varphi,\sim_\varphi,V_\varphi,Exp_\varphi \rangle$, where
        \begin{itemize}
            \item $S_\varphi = \{1,\dots, m \}$.
            \item A single agent $1$ and $\forall i, j\in S_{\varphi}$, $i\sim_1 j$. 
            \item $V_\varphi(j) = \{p_j\}$ for all $j\in S_{\varphi}$. 
            \item For each clause $c_j\in C$, $Exp_\varphi(j) = (\sum_{\ell\in c_j}(\Sigma_{\varphi}\setminus\{\tr(\ell), \tr(\neg{\ell})\})^*\tr(\ell)(\Sigma_{\varphi}\setminus\{\tr(\ell), \tr(\neg{\ell})\})^*)$, where the sum is over the literals in the clause $c_j$. 
            
        \end{itemize}
        \item Formula: $\psi := \tr(Q_1x_1)\dots \tr(Q_nx_n)\bigwedge_{i\in [m]}(\hat{K_1}p_i)$.
        \item Starting world: $s$ is any world from $S$.
    \end{itemize}
    
    We need to prove that the QBF $\varphi$ is $\T$ iff $\M_\varphi, s\vDash \psi$.
        Let us start by proving the forward direction: that is if $\varphi$ is $\T$ then we prove that $\M_{\varphi}, s \vDash \psi$. 
    
    Consider the set of $\mathcal{T}$ all assignments of $(\ell_1, \dots, \ell_n)$ (with $\ell_i \in \{x_i, \neg{x_i}\}$ that make the CNF formula $\gamma$ evaluates to $\T$. Since we assumed that the QBF $\varphi$ is $\T$ we observe that there exist a subset $\mathcal{T}'\subseteq \mathcal{T}$ that has the ``structure" of $Q_1x_1\dots Q_nx_n$. By the construction of $\tr$ and the formula $\psi$ we see that all we need to show is that for all assignment $(\ell_1, \dots, \ell_n)\in  \mathcal{T}'$ $\M_{\varphi}|_{\tr(\ell_1)\dots \tr(\ell_n)}, s\vDash \psi$. 
    
    Let $w = \tr(\ell_1)\dots \tr(\ell_n)$ where $(\ell_1, \dots, \ell_n) \in \mathcal{T}'$. 
    Consider any world $j$ (corresponding to the clause $c_j = (\ell_p\vee \ell_q\vee \ell_r)$). What happens to the world $j$ in $\M_{\varphi}|_w$?
    If we think of $\M_{\varphi}|_w$ as a series of $n$ updates (namely, $\M_{\varphi}|_{\tr(y_1)}, \M_{\varphi}|_{\tr(y_1)\tr(y_2)} \dots \M_{\varphi}|_{\tr(y_1)\dots  \tr(y_n)}$) then note that if the variable $x_i$
   is not in the clause $c_j$ then the update by $\tr(y_i)$ does not affect the world $j$. At the same time, since $y_1\dots y_n$ is a satisfying assignment to 
   the $\varphi$ so at least one of $\ell_p, \ell_q, \ell_r$ is in the set $\{y_1, \dots, y_n\}$ and hence after the updating by $\tr(y_p), \tr(y_q)$ and $\tr(y_r)$
   the world $j$ survives. Wlog if we assume $\ell_p = y_p$ then the $Exp(j)$ will have $(\Sigma\setminus\{\tr(\ell_p),\tr(\neg{\ell_p})\})^*$ added in the updated model, which guarantees the survival of the world in subsequent updates.
   
   Thus for any world $j$ (corresponding to the clause $c_j = (\ell_p\vee \ell_q\vee \ell_r)$), in the $\M_{\varphi}|_w$ the world $j$ survives, that is, no world vanishes after the update. And since $s\sim_1 j$ for all $j$, hence, $\M_{\varphi}|_w, s\vDash \bigwedge_{i\in [m]}(\hat{K_1}p_i)$ for any $1\leq j\leq m$, since all the initial $m$ worlds are in the same equivalence class of $\sim_1$. Since this is true for any $w$ corresponding to any satisfying assignment of $\gamma$, so $\M_{\varphi}, s\vDash \psi$.
    
    Conversely, assume $\varphi$ is unsatisfiable. Note that if for a set of literals  $\ell := \ell_1, \dots, \ell_n$  if $\gamma$ evaluates to $\Fa$ then 
    there exists $i_{\ell}$ such that the world $i_{\ell}$ does not survive in $\M_{\varphi}|_{\tr(\ell_1)\dots \tr(\ell_n)}$. This is because 
    there exist at least a clause, say $c_{i_{\ell}}$ in $\gamma$, that evaluates to $\Fa$ when the the literals $\ell$ is assignment $\T$. Consider the 
    The world $i_{\ell}$ in $\M_\varphi$ corresponding to $c_{i_{\ell}}$, will have $Exp(i_{\ell})\backslash w = \delta$. In that case, note that $\M_\varphi|_w, s \nvDash \hat{K_1}p_{i_{\ell}}$.  Thus there is a bijection between the set $\mathcal{T}$ of all satisfying assignments of $\gamma$ and the set $\{w\mid \M_\varphi|_w, s \vDash \hat{K_1}p_{i_{\ell}} \}$. Now from the construction of the formula $\psi$ we see that $\varphi$ is satisfiable iff $\M, s\vDash \psi$.
\end{proof}

\section{Complexity Results for Model checking for $\starfree-\existential$ and
$\word$ fragment of $\POL$}

In this section we prove the complexity of model-checking in the $\starfree-\existential$ and $\word$ fragment of $\POL$. 

\subsection{Complexity for Model checking for $\starfree-\existential$ fragment of $\POL$}

We start with proving that the model-checking problem for the $\starfree-\existential$ is in $\NP$. To prove that the model-checking problem for the $\starfree-\existential$ is in $\NP$ we present an algorithm $\GetSetNP$ and in Lemma~\ref{thm:NP} we prove the correctness and complexity of the algorithm $\mcNP$. 

In Lemma~\ref{thm:NPH} we prove that the model-checking problem in $\NP$-hard. Thus Lemma~\ref{thm:NP} and \ref{thm:NPH} combined gives us 

\starfreeexistNPC*

The algorithm $\mcNP$ calls a non-deterministic subroutine $\GetSetNP$ that takes the model $\M$ and NNF formula $\phi$, which is a $\starfree-\existential$ formula, and returns the set of all worlds $s$ such that $\M,s\vDash \phi$. The subroutine $\GetSetNP$ uses another subroutine $\ResidueByLetter$ that in turn uses a subroutine $\AuxOut$. The goal of the subroutine $\ResidueByLetter$ is to take as input a regular expression $\pi$ and a word $a$ and output the residue $\pi\regdiv a$ in polynomial time. Although the algorithm $\ResidueByLetter$  is straightforward for the sake of completeness we present the pseudo-code formally. The proof of correctness and the complexity of the algorithms (formally stated in Lemma~\ref{lemma:residuebyletter}) follows from standard arguments.

\begin{algorithm}[t]
\caption{$\mcNP$}
\textbf{Input}: $\M = \ldiamondarg{S, R, V, Exp}$, $s\in S$, $\varphi$, where $Exp$ are NFA, and $\varphi$ is a $\starfree-\existential$ Formula\\
\textbf{Output}: Returns $\T$ iff $\M,s\vDash\varphi$
\begin{algorithmic}[1] 
\IF{$s\in \GetSetNP(\M, \phi)$}
\STATE Return $\T$
\ENDIF
\end{algorithmic}
\label{algo:mcSFSTAR}
\end{algorithm}
\vspace{-0.5cm}
\begin{algorithm}[t]
\caption{{$\GetSetNP$}}
\textbf{Input}: $\M = \ldiamondarg{S, R, V, Exp}, \varphi$, where $Exp$ are NFA, and $\varphi$ is a $\starfree-\existential$ Formula\\
\textbf{Output}: Returns set of states $S'\subseteq S$ such that $\M,s\vDash\varphi$ for all $s\in S'$
\begin{algorithmic}[1] 
\IF{$\varphi=p$ is a propositional variable }\label{getsetnp:basecase}
    \STATE $S'=\emptyset$\;
    \FOR{$s\in S$}
        \IF{$p\in V(s)$}
            \STATE $S' = S'\cup \{ s\}$\;
        \ENDIF
    \ENDFOR
    \STATE Output $S'$\;
\ENDIF
\IF{$\varphi = \neg p$, where $p$ is a propositional variable}
	\STATE Output $S\setminus \GetSetNP(\M,\psi)$
\ENDIF
\IF{$\varphi = \psi_1 \vee \psi_2$}
    \STATE Output $\GetSetNP(\M,\psi_1)\cup\GetSetNP(\M,\psi_2)$
\ENDIF
\IF{$\varphi = \psi_1\wedge\psi_2$}
    \STATE Return $\GetSetNP(\M,\psi_1)\cap\GetSetNP(\M,\psi_2)$
\ENDIF
\IF{$\varphi = \ldiamondarg{\pi}\psi$}
    \STATE $\pi' = \pi$\\
    \STATE $\M' = \M$
    \FOR{ $i = 1$ to $|\pi|$}\label{ln:NPguessloop}
        \IF{$\epsilon\in\pi'$ and $s\in S$}\label{ln:piexhaust}
            \STATE return $\GetSetNP(\M',\psi)$
        \ENDIF
        \STATE Guess a letter $a\in\Sigma$\\
        \STATE $\pi' = \pi'\backslash a$ (using $\ResidueByLetter$)\label{line:NPResidue}\\
        \FOR{ each state $s\in S$ }\label{ln:NPmodelupdate}
            \STATE $Exp(s) = Exp(s)\backslash a$ (using $\ResidueByLetter$)\label{line:NPExpResidue}
        \ENDFOR
    \ENDFOR
\ENDIF
\IF{$\varphi = \hat{K_i}\psi$}  	
    \STATE $S' = \GetSetNP(\M,\psi)$
    \STATE Output $\{s\in S\mid \exists t\in S'\mbox{ and }t\sim_i s \}$
\ENDIF
\end{algorithmic}
\label{algo:GetSetNP}
\end{algorithm}


\begin{algorithm}[t]
\caption{{$\ResidueByLetter$}}
\textbf{Input}: Regular expression $\pi$ and a letter $a\in\Sigma$\\
\textbf{Output}: Returns $\pi\backslash a$
\begin{algorithmic}[1]
\IF{$\pi\in \Sigma\cup\{\epsilon,\delta \}$}
    \IF{$\pi = a$}
        \STATE return $\epsilon$\;
    \ELSE
        \STATE return $\delta$\;
    \ENDIF
\ENDIF
\IF{$\pi = \pi_1 + \pi_2$}
    \STATE return \\
    \ \ \ \ \ $\ResidueByLetter(\pi_1, a) + \ResidueByLetter(\pi_2, a)$
\ENDIF
\IF{$\pi = \pi_1.\pi_2$}
    \STATE return $\ResidueByLetter(\pi_1, a).\pi_2$ \\ 
   \ \ \ \ \ \ \ \ \ \ \ \ \ \ \ \ \ \  \ \ \ \ \ \ $+\AuxOut(\pi_1).\ResidueByLetter(\pi_2,a)$
\ENDIF
\IF{$\pi = (\pi_1)^*$}
    \STATE return $\ResidueByLetter(\pi_1, a).(\pi_1)^*$
\ENDIF
\end{algorithmic}
\end{algorithm}

\begin{algorithm}[tb]
\caption{{$\AuxOut$}}
\textbf{Input}: A regular expression $\pi$\\
\textbf{Output}: Returns $\epsilon$ if $\epsilon\in\mathcal{L}(\pi)$ else $\delta$, where $\mathcal{L}(\delta)=\emptyset$\\
\begin{algorithmic}[1]
\STATE Create $A_\pi = <Q_\pi,\Sigma,\delta_\pi,q^\pi_0,F_\pi>$, the NFA for $\pi$\;
\IF{$q^\pi_0\in F_\pi$}
    \STATE return $\epsilon$
\ELSE
    \STATE return $\delta$\;
\ENDIF
\end{algorithmic}
\end{algorithm}

\ 

\

\begin{lemma}
Given a regular expression $\pi$ over $\Sigma$, and $a\in\Sigma$,  $\ResidueByLetter$ returns $\pi\regdiv a$ in polynomial time.
\label{lemma:residuebyletter}
\end{lemma}
\begin{proof}
The algorithm is a recursive one that directly follows the inductive definition~\ref{def:residue} of Residue. Hence the correctness and the complexity of the algorithm follows.
\end{proof}

\begin{lemma}\label{thm:NP}
Given a finite $\POL$ model $\M = \ldiamondarg{S,R,V,Exp}$, an $s\in S$ and a  $\starfree-\existential$ formula $\phi$, the algorithm $\mcNP$ is a polynomial time non-deterministic algorithm that outputs $\T$ iff $\M,s\vDash\phi$.

On other words, the the model-checking problem for the $\starfree-\existential$ fragment of $\POL$ is in $\NP$.
\end{lemma}

\begin{proof}
    We prove the lemma in two parts: first we will prove the correctness of the algorithm $\GetSetNP$ - that is we show that given a finite $\POL$ model $\M = \ldiamondarg{S,R,V,Exp}$, an $s\in S$ and a  $\starfree-\existential$ formula $\phi$, $s\in\GetSetNP(\M, \phi)$ iff $\M,s\vDash\phi$. The correctness of the algorithm  $\mcNP$ follows immediately. 
    
    We then prove the complexity of the algorithm $\mcNP$.
    
    \ 
    
    \noindent\textbf{Proof of Correctness of $\GetSetNP$}  Let us start by proving the correctness of the algorithm $\GetSetNP$.
    This can be proved by induction over the size of $\phi$. 
    
    \textbf{Base Case.} Consider the case where $\phi = p$, where $p\in\BP$. In the IF case in $\ref{getsetnp:basecase}$, the set $S'$ is populated with all the worlds $s\in S$ where $p\in V(s)$. Hence, $\M,s\vDash\phi$ iff $s\in\GetSetNP(\M,\phi)$
    
    \textbf{Induction Hypothesis (IH).} Given a finite $\POL$ model $\M = \ldiamondarg{S,R,V,Exp}$, an $s\in S$ and a  $\starfree-\existential$ formula $\phi$, $s\in\GetSetNP(\M, \phi)$ iff $\M,s\vDash\phi$, where $|\phi|\leq k$, for an integer $k$.
    
    \textbf{Inductive Step.} 
    \begin{itemize}
    \item $\varphi = \neg{\psi}$
    \begin{align*}
        \M,s\vDash\neg{\psi}&\mbox{ iff } \M,s\nvDash\psi\hspace*{12.0cm}\\
        &\mbox{ iff }  s\notin \GetSetNP(\M,\psi)\mbox{, by IH}\\
        &\mbox{ iff }  s\in S\setminus \GetSetNP(\M,\psi)\\
        &\mbox{ iff }  s\in\GetSetNP(\M,\neg{\psi})
    \end{align*}
    \item $\varphi = \psi_1\vee\psi_2$
    \begin{align*}
         \M,s\vDash \psi_1\vee\psi_2 &\mbox{ iff } \M,s\vDash\psi_1\mbox{ or }\M,s\vDash\psi_2\hspace*{12.3cm}\\
        &\mbox{ iff }  s\in\GetSetNP(\M,\psi_1)\\
        &\ \ \ \ \ \ \ \mbox{ or }   s\in\GetSetNP(\M,\psi_2)\mbox{, by IH}\\
        &\mbox{ iff }  s\in\GetSetNP(\M,\psi_1) \\
        &\ \ \ \ \ \ \  \cup\GetSetNP(\M,\psi_2)\\
        &\mbox{ iff }  s\in\GetSetNP(\M,\psi_\vee\psi_2)
    \end{align*}
    \item $\varphi = \psi_1\wedge\psi_2$
    \begin{align*}
         \M,s\vDash \psi_1\wedge\psi_2 &\mbox{ iff }  \M,s\vDash\psi_1\mbox{ and }\M,s\vDash\psi_2 \hspace*{12.3cm}\\
        &\mbox{ iff }  s\in\GetSetNP(\M,\psi_1)\\
        &\ \ \ \ \ \ \  \mbox{and }s\in\GetSetNP(\M,\psi_2)\mbox{, by IH}\\
        &\mbox{ iff } s\in\GetSetNP(\M,\psi_1)\\
        &\ \ \ \ \ \ \ \cap\GetSetNP(\M,\psi_2)\\
        &\mbox{ iff } s\in\GetSetNP(\M,\psi_\wedge\psi_2)
    \end{align*}
    \item $\varphi = \hat{K_i}\psi$
    \begin{align*}
         \M,s\vDash\hat{K_i}\psi & \mbox{ iff } \exists t\sim_i s\mbox{ and }\M,t\vDash\psi\hspace*{13.0cm}\\ 
        &\mbox{ iff } \exists t\sim_i s\mbox{ and }t\in\GetSetNP(\M,\psi)\mbox{, by IH}\\
        &\mbox{ iff } \GetSetNP(\M,\hat{K_i}\psi)
    \end{align*}
    \item $\varphi = \ldiamondarg{\pi}\psi$
    \begin{align*}
        \M,s\vDash\ldiamondarg{\pi}\psi &\mbox{ iff } \exists w\in\LL(\pi): \LL(Exp(s)\regdiv w)\neq\emptyset\hspace*{13.0cm}\\ 
        &\ \ \ \ \ \ \ \mbox{ and }\M|_w,s\vDash\psi\\
        &\mbox{ iff } \exists w: |w|\leq |\pi|\mbox{ and }\LL(Exp(s)\regdiv w)\neq\emptyset\\ 
        &\ \ \ \ \ \ \ \mbox{ and }s\in\GetSetNP(\M|_w,\psi)\\
    \end{align*}
    Since $\pi$ is star-free, hence all words in $\LL(\pi)$ is size at most $|\pi|$. Loop in line~\ref{ln:NPguessloop} guesses $w\in\LL(\pi)$ letter by letter, residues $\pi$ (Line~\ref{line:NPResidue}) and updates model (the for loop starting from Line~\ref{ln:NPmodelupdate}), and recursively calls $\GetSetNP(\M|_w,\psi)$ once $\pi$ is exhausted (condition in Line~\ref{ln:piexhaust}), that is $w\in\LL(\pi)$. The residuation in Line~\ref{line:NPResidue} and the model updation can be done by $\ResidueByLetter$, which is correct by Lemma~\ref{lemma:residuebyletter}. Therefore, $\M,s\vDash\ldiamondarg{\pi}\psi$ iff $s\in\GetSetNP(\M,\ldiaarg{\pi}\psi)$
\end{itemize}
     
     \ 
     
    \noindent\textbf{Complexity of $\GetSetNP$}
    Now for the complexity of the algorithm, $\GetSetNP$  let us prove that the algorithm is a non-deterministic polytime algorithm. $\GetSetNP$ is a recursive algorithm that returns the worlds in $\M$ where $\phi$ holds. The algorithms labels each world $s$ in the $\M$ with $\psi\subseteq\phi$ iff $\psi$ holds in $s$.

    For each case of propositional operators, that is, $\phi = p\mid\ \ \neg{p}\mid\ \ \psi_1\vee\psi_2\mid\ \ \psi_1\wedge\psi_2$, assuming the worlds are labelled by the subformulas of $\phi$, deciding the worlds where $\phi$ holds require linear steps with respect to the worlds in $\M$.
    
    In case of $\phi = \hat{K_i}\psi$, for each world $s$ to be decided whether to label by $\phi$, at worst case at most all the related worlds of $s$ needs to be checked whether at least one of them is labelled by $\psi$. Hence, assuming all the worlds satisfying $\psi$ are labelled, this step takes quadratic steps with respect to the number of worlds in $\M$. 
    
    In the case of $\varphi = \ldiamondarg{\pi}\psi$, $\pi$ is promised to be star-free. Hence, any word $u\in\LL(\pi)$ is such that $|u|\leq|\pi|$. Hence, there exists a sequence of letters $w$ of length at most $|\pi|$, such that $\M|_w,s\vDash\psi$ if and only if $\M,s\vDash\ldiamondarg{\pi}\psi$. By Lemma~\ref{lemma:residuebyletter}, the residuation in Line~\ref{line:NPResidue} and Line~\ref{line:NPExpResidue} takes polynomial time. Also at each step, there are constant number of guesses (a letter from $\Sigma$).\\
    Also, there can be at most $|\phi|$ number of subformulas of $\phi$, each of size at most $|\phi|$.
    
    Hence $\GetSetNP$ is the $\NP$ algorithm for model checking problem where the input formula is promised to be $\starfree-\existential$. Thus the algorithm $\mcNP$ is also a non-deterministic polytime algorithm.

\end{proof}

\begin{lemma}\label{thm:NPH}
The model-checking problem for the $\starfree-\existential$ fragment of $\POL$ is $\NP$-hard.
\end{lemma}

\begin{proof}
We shall prove this by a reduction from 3-SAT. 
    Given a 3-SAT CNF-formula $\varphi$ containing $n$ variables $\{x_1,\dots ,x_n \}$ and $m$ clauses $C = \{c_1,\dots ,c_m \}$, we shall construct our reduction. Consider the translations $\tr(x_i) = a_i$, $\tr(\neg{x_i}) = a'_i$ for all $i\in [n]$. Now we present the model checking instance that we construct from $\varphi$.
    \begin{itemize}
        \item  Alphabet: $\Sigma_\varphi = \cup_{i\in [n]}\{\tr(x_i), \tr(\neg{ x_i})\}$.
        \item Model: $\M_\varphi = \langle S_\varphi,\sim_\varphi,V_\varphi,Exp_\varphi \rangle$, where
        \begin{itemize}
            \item $S_\varphi = \{1,\dots, m \}$.
            \item A single agent $1$ and $\forall i, j\in S_{\varphi}$, $i\sim_1 j$. 
            \item $V_\varphi(j) = \{p_j\}$ for all $j\in S_{\varphi}$. 
            \item For each clause $c_j\in C$, $Exp_\varphi(j) = (\sum_{l\in c_j}(\Sigma_{\varphi}\setminus\{\tr(\ell), \tr(\neg{\ell})\})^*\tr(\ell)(\Sigma_{\varphi}\setminus\{\tr(\ell), \tr(\neg{\ell})\})^*)$, where the sum is over the literals in the clause $c_j$. 
            
        \end{itemize}
        \item Formula: $\psi := \ldiamondarg{a_1 + a'_1}\dots \ldiamondarg{a_n + a'_n}\bigwedge_{i\in [m]}(\hat{K_1}p_i)$.
        \item Starting world: Any $s$ from $S$.
    \end{itemize}
    All we need to show now is that the CNF-formula $\varphi$ is satisfiable iff $\M_\varphi,s\vDash \psi$.

Let us start by assuming $\varphi$ is $\T$. We have to prove that $\M_\varphi,s\vDash\psi$.
Hence there exists an assignment $\sigma = (\ell_1,\ell_2,\ldots,\ell_n)$, where $\ell_i\in\{x_i,\neg{x_i}\}$, such that it evaluates $\varphi$ to $\T$. Now the consider the corresponding word $w = \tr(\ell_1)\tr(\ell_2)\ldots\tr(\ell_n)$. By construction, proving $\M_\varphi|_w,s\vDash\bigwedge_{i\in[m]}(\hat{K_1}p_i)$ proves our claim.

We can consider $\M_\varphi|_w$ to be updated from a series of updates $\M_\varphi|_{\tr(\ell_1)}, \M_\varphi|_{\tr(\ell_1)\tr(\ell_2)},\ldots, \M_\varphi|_{\tr(\ell_1)\tr(\ell_2)\ldots\tr(\ell_n)}$. Let us consider a world $j$ in the model $\M_\varphi$, corresponding to the clause $c_j = (\ell_p\vee \ell_q\vee \ell_r)$, where $\ell_p, \ell_q, \ell_r$ are literals in the clause. Note that, if neither of the literals in $c_j$ contains $x_i$, the update of the model with $\tr(x_i)$ or $\tr(\neg{x_i})$ does not affect the world $j$. Now, on the other hand, without loss of generality, since $\sigma$ is a satisfying assignment, consider literal $\ell_p$ is in $\sigma$, since at least one of the literals in $c_j$ has to be in the satisfying assignment. Note that after  updating corresponding to $\tr(\ell_p)$, the $Exp(j)$ will have the residue regular expression $(\Sigma\setminus\{\tr(\ell_p), \tr(\neg{\ell_p})\})^*$ which guarantees the survival of the world $j$ in future updates.

Hence, since $\sigma$ is a satisfying assignment, every clause will have at least one literal in $\sigma$, hence guaranteeing the survival of all the worlds in $\M_\varphi|_w$. Since all the world survives and $s\sim_1 t$, for every $t\in S_\varphi$, hence $\M_\varphi|_w,s\vDash\bigwedge_{j\in[m]}\hat{K_j}p_j$.

Conversely, let $\varphi$ is unsatisfiable. Hence, for any assignment $\sigma = (\ell_1, \ell_2,\ldots, \ell_n)$, there exists at least one clause, say $c_j = (\ell_p\vee \ell_q\vee \ell_r)$, such that $\neg{\ell_p}, \neg{\ell_q}, \neg{\ell_r}$ is in the $\sigma$. Note update in the world $j$ corresponding to clause $c_j$. After the update corresponding to $\tr(\neg{\ell_p})$, the term $(\Sigma\setminus\{\tr(\ell_p),\tr(\neg{\ell_p})\})^*\tr(\ell_p)(\Sigma\setminus\{\tr(\ell_p),\tr(\neg{\ell_p})\})^*$ in $Exp(j)$ becomes $\delta$ (regular expression for the empty regular language). Same occurs for the update corresponding to $\tr(\neg{\ell_q})$ and $\tr(\neg{\ell_r})$. Hence the $Exp(j)$, after all the three updates, becomes $\delta$, due to which the world $j$ does not survive. Hence $\M_\varphi|_w,s\nvDash\bigwedge_{j\in[m]}\hat{K_j}p_j$, where $w = \tr(\ell_1)\tr(\ell_2)\ldots\tr(\ell_n)$. 

\end{proof}

\begin{algorithm}[h]
\caption{$\mcWORDS$}
\textbf{Input}: $\M = \ldiamondarg{S, R, V, Exp}$, $s\in S$, $\varphi$, where $\varphi$ is a $\word$ Formula.\\
\textbf{Output}: Returns $\T$ iff $\M,s\vDash\varphi$
\begin{algorithmic}[1] 
\IF{$s\in \GetSet(\M, \phi)$}
\STATE Return $\T$
\ENDIF
\end{algorithmic}
\label{algo:mcWord}
\end{algorithm}
\vspace{-0.5cm}
\begin{algorithm}[h]
\caption{$\GetSet$}
\textbf{Input}: $\M = \ldiamondarg{S, R, V, Exp}$, $\varphi$, where $\varphi$ is a $\word$ Formula.\\
\textbf{Output}: Returns set of states $S'\subseteq S$ such that $\M,s\vDash\varphi$ for all $s\in S'$
\begin{algorithmic}[1] 
\IF{$\varphi=p$ is a propositional variable }\label{getset:basecase}
    \STATE $S'=\emptyset$\;
    \FOR{$s\in S$}
        \IF{$p\in V(s)$}
            \STATE $S' = S'\cup \{ s\}$\;
        \ENDIF
    \ENDFOR
    \STATE Output $S'$\;
\ENDIF
\IF{$\varphi = \neg\psi$}
	\STATE Output $S\setminus \GetSet(\M,\psi)$
\ENDIF
\IF{$\varphi = \psi_1 \vee \psi_2$}
    \STATE Output $\GetSet(\M,\psi_1)\cup\GetSet(\M,\psi_2)$
\ENDIF
\IF{$\varphi = \ldiamondarg{w}\psi$}
    \STATE $\M' = <S',R',V',Exp'> = \M$\;
    \FOR{$s\in S'$}\label{getset:residuemodelstart}
        \STATE $Exp'(s) = Exp(s)\setminus w$\label{getset:expresidue}
        \IF{$Exp'(s) = \delta$}
            \STATE $S' = S'\setminus \{s\}$
            \STATE $R' = R'\setminus \{\{s,t \}\}$ for every $t\in S$\;
        \ENDIF
    \ENDFOR\label{getset:residuemodelfin}
    Output $\GetSet(\M',\psi)$
\ENDIF
\IF{$\varphi = \hat{K_i}\psi$}  	
    \STATE $S' = \GetSet(\M,\psi)$
    \STATE Output $\{s\in S\mid \exists t\in S'\mbox{ and }t\sim_i s \}$
\ENDIF
\end{algorithmic}
\end{algorithm}

\subsection{Model-checking for the $\word$ fragment of $\POL$}

In this section we present the proof of the following theorem. 

\wordP*

 \begin{proof}
The model checking algorithm for $\POL$, when $\pi$'s are words, can be designed in a similar way as the folklore recursive model checking algorithm for epistemic logic. Only modification is when, checking whether $\M, s\vDash \langle\pi\rangle \phi$ recursively call $\M|_w, s\vDash \phi$ where $\LL(\pi) = \{w\}$.

We use the algorithm $\mcWORDS$ for model checking the $\word$ fragment. 
The algorithm $\mcWORDS$ calls the subroutine $\GetSet$ which is a polytime algorithm that takes a model $\M$ and a word-formula $\phi$ and outputs the set of all states $s$ such that $\M,s\vDash \phi$. The correctness of the 
algorithm $\mcWORDS$ follows from the correctness of the algorithm $\GetSet$.
The proof of correctness of the algorithm $\GetSet$ is presented in Lemma~\ref{lem:P}.
\end{proof}

\begin{restatable}{lemma}{wordCorrect}\label{lem:P}
    Given a finite $\POL$ model $\M = \ldiamondarg{S,R,V,Exp}$, an $s\in S$ and a $\word$ formula $\phi$, $s\in\GetSet(\M, \phi)$ iff $\M,s\vDash\phi$.
\end{restatable}
\begin{proof}
    This can be proved by induction over the size of $\phi$. 
    
    \textbf{Base Case.} Consider the case where $\phi = p$, where $p\in\BP$. In the IF case in $\ref{getset:basecase}$, the set $S'$ is populated with all the worlds $s\in S$ where $p\in V(s)$. Hence, $\M,s\vDash\phi$ iff $s\in\GetSet(\M,\phi)$.
    
    \textbf{Induction Hypothesis.} Given a finite $\POL$ model $\M = \ldiamondarg{S,R,V,Exp}$, an $s\in S$ and a $\POL$ formula of the $\word$ fragment $\phi$, $s\in\GetSet(\M, \phi)$ iff $\M,s\vDash\phi$, where $|\phi|\leq k$, for an integer $k$.
    
    \textbf{Inductive Step.} 
    \begin{itemize}
    \item $\varphi = \neg{\psi}$
    \begin{align*}
        \M,s\vDash\neg{\psi}  &\mbox{ iff } \M,s\nvDash\psi\hspace*{13.0cm}\\ 
        &\mbox{ iff } s\notin \GetSet(\M,\psi)\mbox{, by IH}\\
        &\mbox{ iff } s\in S\setminus \GetSet(\M,\psi)\\
        &\mbox{ iff } s\in\GetSet(\M,\neg{\psi})
    \end{align*}
    \item $\varphi = \psi_1\vee\psi_2$
    \begin{align*}
        \M,s\vDash \psi_1\vee\psi_2 &\mbox{ iff } \M,s\vDash\psi_1\mbox{ or }\M,s\vDash\psi_2\hspace*{12.3cm}\\
        &\mbox{ iff } s\in\GetSet(\M,\psi_1)\\
        &\ \ \ \ \ \ \ \mbox{ or }s\in\GetSet(\M,\psi_2)\mbox{, by IH}\\
        &\mbox{ iff } s\in\GetSet(\M,\psi_1)\cup\GetSet(\M,\psi_2)\\
        &\mbox{ iff } s\in\GetSet(\M,\psi_\vee\psi_2)
    \end{align*}
    \item $\varphi = \hat{K_i}\psi$
    \begin{align*}
        \M,s\vDash\hat{K_i}\psi &\mbox{ iff } \exists t\sim_i s\mbox{ and }\M,t\vDash\psi\hspace*{13.0cm}\\ 
        &\mbox{ iff } \exists t\sim_i s\mbox{ and }t\in\GetSet(\M,\psi)\mbox{, by IH}\\
        &\mbox{ iff } \GetSet(\M,\hat{K_i}\psi)
    \end{align*}
    \item $\varphi = \ldiaarg{w}\psi$\\
    Since $w$ is a word, $Exp(s)\backslash w$ is calculated in line \ref{getset:expresidue}, and hence for all $s\in S$ in the loop in \ref{getset:residuemodelstart}-\ref{getset:residuemodelfin}. Also a certain state $t(\in S)\notin S'$ iff $Exp(t)\backslash w = \delta$, that is, $\mathcal{L}(Exp(t)\backslash w) = \emptyset$. By Lemma~\ref{lemma:residuebyletter}, we have the correctness of the residuation. Hence after the termination of loop \ref{getset:residuemodelstart}-\ref{getset:residuemodelfin}, $\M' = \M|_w$.
    \begin{align*}
        \M,s\vDash\ldiamondarg{w}\psi &\mbox{ iff } \mathcal{L}(Exp(s)\backslash w)\neq\emptyset\hspace*{12cm}\\ 
        &\ \ \ \ \ \ \mbox{ and }\M|_w,s\vDash\psi\mbox{, since $w$ is word}\\
        &\mbox{ iff } \mathcal{L}(Exp(s)\backslash w)\neq\emptyset\\
        &\ \ \ \ \ \ \mbox{ and }s\in\GetSet(\M',\psi)\mbox{, by IH}\\
        &\mbox{ iff } s\in\GetSet(\M,\ldiamondarg{w}\psi)
    \end{align*}
\end{itemize}
\end{proof}


\end{document}


\maketitle

\section{$\POL$ Model Checking is in $\PSPACE$}


In this section, we prove that $\POL$ model-checking is in $\PSPACE$. We prove it by showing that the algorithm  $\DecidePSPACE$ (presented in Section~\ref{results}), takes as  input a $\POL$ model $\M = \ldiamondarg{S,\sim,V,\Exp}$, an initial starting world  $s\in S$, and a $\POL$ formula $\varphi$ and returns $\T$ if and only if $\M, s \models \phi$, and at the same time, the algorithm  $\DecidePSPACE$ runs in polynomial space.

A crucial step in $\DecidePSPACE$ is when (in line~\ref{ln:oracle})  it 
uses an oracle to check if $\M' = \M|_w$ for some word $w \in \LL(\pi)$. 
So to prove that the algorithm $\DecidePSPACE$ we need to prove the existence of a polynomial space subroutine to check if $\M' = \M|_w$ for some word $w \in \LL(\pi)$. For this we the Algorithm~\ref{algoMAIN:algofororacle} which is a non-deterministic polynomial space algorithm to check the same. And using Savitch's theorem we can conclude that a deterministic polynomial space algorithm must also exist. 
We start by proving the correctness and complexity of Algorithm~\ref{algoMAIN:algofororacle} (in Section~\ref{sec:proofnd}) and then using this we present the proof of correctness and complexity of $\DecidePSPACE$ in Section~\ref{sec:proofmain}.


\subsection{Correctness and Complexity of Algorithm~\ref{algoMAIN:algofororacle}}\label{sec:proofnd}

Before we prove the correctness of $\DecidePSPACE$ we need to first prove that there is a PSPACE algorithm (oracle) for checking if $\M' = \M|_w$ for some word $w \in \LL(\pi)$. We will first prove that Algorithm~\ref{algoMAIN:algofororacle} is a non-deterministic algorithm for  checking if $\M' = \M|_w$ for some word $w \in \LL(\pi)$ that takes polynomial space. For that we need to understand the following: for these two models $\M$ and $\M'$, what is the length of the smallest $w\in \LL(\pi)$ such that $\M' = \M|_w$? 
Lemma~\ref{theorem:ExponentialWitness} takes care of this query. We start with Observation~\ref{observation:prefixstatequivalence} which is required to show that the length of the smallest such $w$ is bounded by $2^{\pi}\times \Pi_{t \in \M} 2^{|\Exp(t)|}$.


\begin{restatable}{observation}{prefixstatequivalence}
For a finite model $\M = \langle S,\sim,V,\Exp\rangle$, a world $s\in S$, and for every pair of words $w, w'\in \Sigma^*$, if $w$ and $w'$ are simulated in the $DFA(\Exp(s)) = (Q_s, \Sigma, \delta_s, q^0_s, F_s)$, with both simulations ending in the state $q\in Q_s$, then $DFA(\Exp(s)\backslash w) = DFA(\Exp(s)\backslash w')$.
\label{observation:prefixstatequivalence}
\end{restatable}

\begin{proof}
From the definition of residue (Definition~\ref{def:residue}) we have 
$u\in \mathcal{L}(DFA(\Exp(s)\backslash w))$ iff $wu\in \mathcal{L}(DFA(\Exp(s)))$.  Note that,
\begin{align*}
     wu\in \mathcal{L}(DFA(\Exp(s)))&\mbox{ \textit{iff} } u\in \mathcal{L}(DFA(\Exp(s)\backslash w)) \hspace{12.0cm}\\
   &\mbox{ \textit{iff} }  wu\in \mathcal{L}(DFA(\Exp(s)))\\
    &\mbox{ \textit{iff} } \widehat{\delta_s}(q^0_s, wu)\in F_s\\
    &\mbox{ \textit{iff} }  \widehat{\delta_s}(\widehat{\delta_s}(q^0_s, w), u)\in F_s\\
    &\mbox{ \textit{iff} }  \widehat{\delta_s}(\widehat{\delta_s}(q^0_s, w'), u)\in F_s,
\end{align*}
the last if and only if holds as by assumption $\widehat{\delta_s}(q^0_s, w) = \widehat{\delta_s}(q^0_s, w')$.  Finally note that, 

\begin{align*}
     \widehat{\delta_s}(\widehat{\delta_s}(q^0_s, w'), u)\in F_s 
    \mbox{ \textit{iff} } &  w'u\in \mathcal{L}(DFA(\Exp(s)))\\
    \mbox{ \textit{iff} } & u\in \mathcal{L}(DFA(\Exp(s)\backslash w'))
\end{align*}

\end{proof}


Using Observation~\ref{observation:prefixstatequivalence} we can obtain an upper bound on the size of the set $\Gamma^\M = \{\M|_w\mid w\in\Sigma^*\}$.

\begin{restatable}{lemma}{NumberOfProjectedModels}\label{lem:firstnd}
Given a finite $\POL$ model $\M = \langle S,\sim,V,\Exp \rangle$, the size of $|\Gamma^\M|\leq\Pi_{t \in \M} 2^{|\Exp(t)|}$.
\label{lemma:NumberOfProjectedModels}
\end{restatable}
\begin{proof}
For any given world $s\in S$ and $DFA(\Exp(s)) = (Q_s, \Sigma, \delta_s, q^0_s, F_s)$, relation $Z^\M_s\subseteq \Sigma^* \times \Sigma^*$ is defined as:
$$
(w,u)\in Z^\M_s\mbox{ iff } \widehat{\delta_s}(q^0_s, w) = \widehat{\delta_s}(q^0_s, u)
$$
Clearly, $Z^\M_s$ is an equivalence relation, hence creates a partition over $\Sigma^*$. Therefore by Observation \ref{observation:prefixstatequivalence}, for any pair $(w,w')\in Z^\M_s$, $DFA(\Exp(s)\backslash w) = DFA(\Exp(s)\backslash w')$. In other words, any $w$ from a single partition $\big[[w]\big]_s$ over $\Sigma^*$ by $Z^\M_s$, will produce the same $DFA(Exp(s)\backslash w)$.
Therefore, number of partitions over $\Sigma^*$ by $Z^\M_s$ is at most the number of states in $DFA(\Exp(s))$, that is, $2^{|\Exp(s)|}$.

For the model $\M$, let $n = |S|$. Consider the following $n$-tuple $T_\M^w = (D^w_{s_1},\dots,D^w_{s_n})$ for all $w\in\Sigma^*$, where $D^w_{s_i} = DFA(Exp(s_i)\backslash w)$ for the world $s_i\in S$. Note that $|\Gamma^\M| = |\{T^w_\M \mid w\in\Sigma^* \}|$, because for every $w\in \Sigma^*$, $T^w_\M$ is the tuple enumerating the $Exp$ function of $\M|_w$ according to the worlds of $\M$ (Note that, if a world vanishes in $\M|_w$ for some $w$, the corresponding DFA will be of empty language). 

For each world $s_i$, the total number of $D^w_{s_i}$ possible is at most $2^{|\Exp(s_i)|}$, and hence the total number of such tuples possible is $\Pi_{t \in \M} 2^{|\Exp(t)|}$.
\end{proof}

Now using the Lemma~\ref{lem:firstnd} we prove Lemma~\ref{theorem:ExponentialWitness} that would be used to prove the correctness of the Algorithm~\ref{algoMAIN:algofororacle}.

\begin{lemma}
Given a $\POL$ model $\M=\langle S,\sim,V,\Exp\rangle$, a world $s\in S$ and a formula $\ldiamondarg{\pi}\psi$, $\M,s\vDash\ldiamondarg{\pi}\psi$ iff $\exists w\in \LL(\pi)$ of length at most $2^{|\pi|}\times \Pi_{t \in \M} 2^{|\Exp(t)|}$ such that $\M|_w,s\vDash\psi$ and the world $s$ survives in $\M_w$.
\label{theorem:ExponentialWitness} 

\end{lemma}
\begin{proof}
The $\Leftarrow$ direction is easy: if there exists a $w\in \LL(\pi)$ such that $\M|_w,s\vDash\psi$ then by definition $\M, s\vDash \ldiamondarg{\pi} \psi$. 

Now for the $\Rightarrow$ direction, consider the edge graph $G^\M(\Gamma^\M, E^\M)$ on vertex set $\Gamma^{\M}$ and and edge from vertex $\M|_u$  to vertex $\M|_{u'}$ is present if and only if there exists a $a\in \Sigma$ such that $\M|_{ua} = \M|_{u'}$. From Lemma~\ref{lemma:NumberOfProjectedModels} we know that the number of vertices in the graph is at most $\Pi_{t \in \M} 2^{|\Exp(t)|}$. 
Thus
it is easy to observe that for any two vertices $\M|_u, \M|_{u'} \in \Gamma^{\M}$ 
the set $\Delta_{\M|_u, \M|_{u'}}:= \{w\in \Sigma^* \mid \M|_{uw} = \M|_{u'}$\} is a regular language accepted by a DFA of size at most $\Pi_{t \in \M} 2^{|\Exp(t)|}$. 

Let $\M,s\vDash\ldiamondarg{\pi}\psi$. Then by definition there exists a $w_0\in \LL(\pi)$ such that $\M|_{w_0}, s\vDash \psi$. Note that $\M$ and $\M|_{w_0}$ are both vertices of the graph $G^{\M}$. So all the $\{w\in \LL(\pi) \mid \M|_{w} = \M|_{w_0}\}$ is nothing but the the set $\LL(\pi) \cap \Delta_{\M, \M|_{w_0}}$. So we know there is a $w \in \LL(\pi) \cap \Delta_{\M, \M|_{w_0}}$ of size at most $2^{\pi}\times \Pi_{t \in \M} 2^{|\Exp(t)|}$ and for that $w$, 
$\M|_w, s\vDash \psi$. 
\end{proof}

From Lemma~\ref{theorem:ExponentialWitness} we know that if $\M' = \M|_w$ for some $w\in \LL(\pi)$ there exists a $w_0\in \LL(\pi)$ with $|w_0|\leq 2^{|\pi|}\times \Pi_{t \in S} 2^{|\Exp(t)|}$ and $\M' = \M|_{w_0}$. Since Algorithm~\ref{algoMAIN:algofororacle} guesses a $w = \alpha_1 \dots, \alpha_j \dots$ (one letter at a time) of length at most $2^{|\pi|}\times \Pi_{t \in S} 2^{|\Exp(t)|}$ and $\M' = \M|_{w_0}$ and checks if $\M'=\M|_w$, from Lemma~\ref{theorem:ExponentialWitness} we see that the algorithm is correct. Note that Algorithm~\ref{algoMAIN:algofororacle} is a non-deterministic algorithm. The algorithm uses only polynomial space (in the size of $\M$), since at any point of time (say at the $j$th iteration of the \textbf{for} loop in Line~\ref{ln:oracleforstart}) the algorithm only have to update the model from $\M|_{\alpha_1 \dots \alpha_{j-1}}$ to  $\M|_{\alpha_1 \dots \alpha_{j}}$ which can be done using polynomial space. Note that the Algorithm~\ref{algoMAIN:algofororacle} does not have to remember the string $\alpha_1 \dots \alpha_j\dots $ which can be of size exponential.  So the Algorithm~\ref{algoMAIN:algofororacle} is a non-deterministic polynomial space algorithm.  Thus we have

\begin{theorem}\label{thm:nd}
Algorithm~\ref{algoMAIN:algofororacle} is a non-deterministic polynomial space algorithm that correctly checks if  $\M' = \M|_w$ for some $w\in \LL(\pi)$. 
\end{theorem}

By Savitch's Theorem \cite{DBLP:journals/jcss/Savitch70},
there is a deterministic polynomial space oracle for checking if $\M' = \M|_w$ for some $w\in \LL(\pi)$.  Thus we have 

\begin{theorem}\label{thm:det}
There is a deterministic polynomial space algorithm that correctly checks if  $\M' = \M|_w$ for some $w\in \LL(\pi)$. 
\end{theorem}

\subsection{Correctness and Complexity of $\DecidePSPACE$}\label{sec:proofmain}

Using the Theorem~\ref{thm:det} we now present the proof of correctness and complexity of $\DecidePSPACE$.

\begin{restatable}{lemma}{correctness}\label{thm:correctness}
$\DecidePSPACE(\M,s,\varphi)$ returns $\T$ iff $\M,s\vDash\varphi$.
\end{restatable}
\begin{proof}
We will prove $\DecidePSPACE(\M, s, \varphi)$ returns $\T$ iff $\M,s\vDash\varphi$ by induction on the size of $\varphi$.

\noindent \textbf{Base Case. }Consider $\varphi$ to be a proposition. 
$\M,s\vDash\varphi$ iff $\varphi\in V(s)$ iff $\DecidePSPACE(\M, s, \varphi)$ returns $\T$.

\noindent\textbf{Induction Hypothesis. } For any $\POL$ formula $|\psi|\leq m$, any finite model $\M$ and any world $s$, $\DecidePSPACE(\M, s, \psi)$ returns $\T$ iff $\M,s\vDash\psi$.

\noindent\textbf{Inductive Step. } We go case by case over the forms of $\varphi$. For all the cases except when $\varphi = \ldiamondarg{\pi}\psi$, the inductive step is trivial. So we focus on the crucial case when $\varphi = \ldiamondarg{\pi}\psi$.

By definition we know that $\M, s\vDash \ldiaarg \pi \psi$ iff there exists a $w\in \LL(\pi)$ such that $\M|_w, s\vDash \psi$.  In other words, $\M, s\vDash \ldiaarg \pi \psi$ iff there exists $\M'\in \Gamma^{\M}$ such that $\M' = \M|_w$ for some $w\in \LL(\pi)$ and the world $s$ survives and $\M', s\vDash \psi$. By induction hypothesis $\M', s\vDash \psi$ iff $\DecidePSPACE(\M',s, \psi)$ is $\T$.  In the \textbf{for} loop Lines~\ref{line:forallmodelstart} to \ref{line:forallmodelend} the algorithm goes over all $\M'\in \Gamma^{\M}$.  For each of the $\M'$ the algorithm in Line~\ref{line:oraclecall} calls the oracle (Algorithm~\ref{algoMAIN:algofororacle}) to check if $\M' = \M|_w$ for some $w\in \LL(\pi)$ and if the world survives calls $\DecidePSPACE(\M',s, \psi)$ recursively. Since we have already argued correctness of Algorithm~\ref{algoMAIN:algofororacle} by Theorem~\ref{thm:nd} , the correctness of the algorithm follows. 
\end{proof}

Now we move on to prove that $\DecidePSPACE$ uses polynomial amount of space. This (along with Lemma~\ref{thm:correctness} and Theorem~\ref{thm:det}) would prove that $\DecidePSPACE$ is in $\PSPACE$.

\begin{restatable}{lemma}{mcspace}\label{thm:space}
	$\DecidePSPACE$ uses polynomial space.
\end{restatable}
\begin{proof}
Since the algorithm is recursive, the formal argument (as in the proof of Lemma~\ref{thm:correctness}) should go via induction. It 
can be observed that in all the cases except when $\varphi = \ldiamondarg{\pi}\psi$, $\DecidePSPACE$ only uses a constant amount of space before making the recursive call. In the case when $\varphi = \ldiamondarg{\pi}\psi$, since the algorithm goes over all $\M'\in \Gamma^{\M}$(\textbf{for} loop from Line~\ref{line:forallmodelstart} to Line~\ref{line:forallmodelend}), the algorithm  will have to do some bookkeeping to keep a track on when $\M$ is being processed and to store the current $\M$. But since $\Gamma^{\M}$ has size exponential
(Lemma~\ref{lemma:NumberOfProjectedModels}) and since all $\M'\in \Gamma^{\M}$ can be represented in size polynomial in the size of $\M$ so it is possible to do the bookkeeping and tracking using only polynomial space. For any $\M'$ inside the \textbf{for} loop the only non-trivial thing to do is the call to the oracle in Line~\ref{line:oraclecall}. By Theorem~\ref{thm:det}, there exists an algorithm in $\PSPACE$ that given two models $\M$ and $\M'$ and regular expression $\pi$ checks if there exists $w\in \LL(\pi)$ such that $\M' = \M|_w$. This space can of course be reused for any iteration of the \textbf{for} loop in Line~\ref{line:forallmodelstart}. So $\DecidePSPACE$ uses at most polynomial space before making a recursive call and hence the total space used by the algorithm is polynomial. 
\end{proof}

By combining Lemma~\ref{thm:correctness} and \ref{thm:space} we have the following:

\begin{theorem}
$\POL$ model-checking is in $\PSPACE$.
\end{theorem}

\section{Model checking for $\POL$ is $\PSPACE$-hard}

As pointed out in Section~\ref{results} there are two sources for the model checking to be $\PSPACE$-hard: Kleene star in observation modalities as well as alternations in modalities (sequences of nested existential and universal modalities). 
We prove the $\PSPACE$-hardness of the model-checking against the $\existential$ fragment  and the $\starfree$ fragment of $\POL$ respectively.

\existentialLB*
\begin{proof}
\cite{DBLP:conf/focs/Kozen77} proved that the following problem,  called the intersection non-emptiness problem, is $\PSPACE$-complete: given a finite collection of DFAs $\automaton_1, \dots, \automaton_n$, decide whether $\LL(\automaton_1) \cap \dots \cap \LL(\automaton_n)  \neq \emptyset$.  Let us reduce this problem to the model checking for $\POL$. For that we construct the instance $\tr(\automaton_1, \dots, \automaton_n) = (\modelM, s_0, \phi)$ as:
\begin{itemize}
    \item $\M = (S, \sim, V, Exp)$ where 
    \begin{itemize}
    	\item     $S = \set{0,1, \dots, n} \cup \set{0', 1', \dots, (n-1)'}$.
    	\item $\sim_1 = \set{(i, i'), (i, i), (i', i'), (i', i)  \suchthat i=0, \dots, n-1}$,
    	$\sim_2 = \set{(i', i+1), (i', i'), (i+1, i+1), (i+1, i') \suchthat i=0, \dots, n-1}$,
    	\item $V(s) = V(s') = \emptyset$ for all $s \leq n-1$, $V(n) = \set{p}$.
    	\item     $Exp(0) = Exp(0') = \Sigma^*$ and $Exp(i) = Exp(i') = \automaton_i$ for all $i=1\dots n$.
    \end{itemize}
    \item $s_0 = 0$.
    \item $\phi = \ldiamondarg{\Sigma^*} (\hat K_1\hat K_2)^{n+1} p$.
\end{itemize}

The model $\M$ is a chain of worlds: starting with two worlds $0$, $0'$, labeled by an automaton for the universal language $\Sigma^*$. followed by worlds $1$, $1'$ labelled by automaton $\automaton_1$, followed by worlds $2$, $2'$ labelled by automaton $\automaton_2$, etc. It ends with worlds $n-1$, $(n-1)'$ labelled by automaton $\automaton_{n-1}$, followed by a world $n$ labelled by $\automaton_{n}$. Proposition $p$ is false in all worlds except $n$. The formula $\phi$ says that there exists a word $w$ such that $ (\hat K_1\hat K_2)^{n+1} p$ holds in $\M|_w, 0$. As $n$ should still be reachable, it means that the word must be in $\LL(\automaton_i)$ for all $i\in [n]$.
Now, $\tr(\automaton_1, \dots, \automaton_n)$ is computable in polynomial time in the size of
$(\automaton_1, \dots, \automaton_n)$.
Furthermore we have $\LL(\automaton_1) \cap \dots \cap \LL(\automaton_n)  \neq \emptyset$ iff $\modelM , s_0 \models \phi$.
\end{proof}

\ 

\starfreeLB*
\begin{proof}
    We shall prove by a reduction from TQBF. 
    Given a Quantified Boolean formula $\varphi = Q_1x_1\dots Q_nx_n \gamma$, where $\gamma$ is in CNF containing $n$ variables $\{x_1,\dots ,x_n \}$ and $m$ clauses $C = \{c_1,\dots ,c_m \}$ and $Q_i\in\{\exists, \forall\}$ for all $i\in [n]$,  we shall construct our reduction. Consider the translations $\tr(x_i) = a_i$, $\tr(\neg{x_i}) = a'_i$, $\tr(\exists x_i) = \ldiamondarg{a_i + a'_i}$ and $\tr(\forall x_i) = [a_i + a'_i]$ for all $i\in [n]$. Now we present the model checking instance that we construct from $\varphi$.
    \begin{itemize}
        \item  Alphabet: $\Sigma_\varphi = \cup_{i\in [n]}\{\tr(x_i), \tr(\neg{ x_i})\}$.
        \item Model: $\M_\varphi = \langle S_\varphi,\sim_\varphi,V_\varphi,Exp_\varphi \rangle$, where
        \begin{itemize}
            \item $S_\varphi = \{1,\dots, m \}$.
            \item A single agent $1$ and $\forall i, j\in S_{\varphi}$, $i\sim_1 j$. 
            \item $V_\varphi(j) = \{p_j\}$ for all $j\in S_{\varphi}$. 
            \item For each clause $c_j\in C$, $Exp_\varphi(j) = (\sum_{\ell\in c_j}(\Sigma_{\varphi}\setminus\{\tr(\ell), \tr(\neg{\ell})\})^*\tr(\ell)(\Sigma_{\varphi}\setminus\{\tr(\ell), \tr(\neg{\ell})\})^*)$, where the sum is over the literals in the clause $c_j$. 
            
        \end{itemize}
        \item Formula: $\psi := \tr(Q_1x_1)\dots \tr(Q_nx_n)\bigwedge_{i\in [m]}(\hat{K_1}p_i)$.
        \item Starting world: $s$ is any world from $S$.
    \end{itemize}
    
    We need to prove that the QBF $\varphi$ is $\T$ iff $\M_\varphi, s\vDash \psi$.
        Let us start by proving the forward direction: that is if $\varphi$ is $\T$ then we prove that $\M_{\varphi}, s \vDash \psi$. 
    
    Consider the set of $\mathcal{T}$ all assignments of $(\ell_1, \dots, \ell_n)$ (with $\ell_i \in \{x_i, \neg{x_i}\}$ that make the CNF formula $\gamma$ evaluates to $\T$. Since we assumed that the QBF $\varphi$ is $\T$ we observe that there exist a subset $\mathcal{T}'\subseteq \mathcal{T}$ that has the ``structure" of $Q_1x_1\dots Q_nx_n$. By the construction of $\tr$ and the formula $\psi$ we see that all we need to show is that for all assignment $(\ell_1, \dots, \ell_n)\in  \mathcal{T}'$ $\M_{\varphi}|_{\tr(\ell_1)\dots \tr(\ell_n)}, s\vDash \psi$. 
    
    Let $w = \tr(\ell_1)\dots \tr(\ell_n)$ where $(\ell_1, \dots, \ell_n) \in \mathcal{T}'$. 
    Consider any world $j$ (corresponding to the clause $c_j = (\ell_p\vee \ell_q\vee \ell_r)$). What happens to the world $j$ in $\M_{\varphi}|_w$?
    If we think of $\M_{\varphi}|_w$ as a series of $n$ updates (namely, $\M_{\varphi}|_{\tr(y_1)}, \M_{\varphi}|_{\tr(y_1)\tr(y_2)} \dots \M_{\varphi}|_{\tr(y_1)\dots  \tr(y_n)}$) then note that if the variable $x_i$
   is not in the clause $c_j$ then the update by $\tr(y_i)$ does not affect the world $j$. At the same time, since $y_1\dots y_n$ is a satisfying assignment to 
   the $\varphi$ so at least one of $\ell_p, \ell_q, \ell_r$ is in the set $\{y_1, \dots, y_n\}$ and hence after the updating by $\tr(y_p), \tr(y_q)$ and $\tr(y_r)$
   the world $j$ survives. Wlog if we assume $\ell_p = y_p$ then the $Exp(j)$ will have $(\Sigma\setminus\{\tr(\ell_p),\tr(\neg{\ell_p})\})^*$ added in the updated model, which guarantees the survival of the world in subsequent updates.
   
   Thus for any world $j$ (corresponding to the clause $c_j = (\ell_p\vee \ell_q\vee \ell_r)$), in the $\M_{\varphi}|_w$ the world $j$ survives, that is, no world vanishes after the update. And since $s\sim_1 j$ for all $j$, hence, $\M_{\varphi}|_w, s\vDash \bigwedge_{i\in [m]}(\hat{K_1}p_i)$ for any $1\leq j\leq m$, since all the initial $m$ worlds are in the same equivalence class of $\sim_1$. Since this is true for any $w$ corresponding to any satisfying assignment of $\gamma$, so $\M_{\varphi}, s\vDash \psi$.
    
    Conversely, assume $\varphi$ is unsatisfiable. Note that if for a set of literals  $\ell := \ell_1, \dots, \ell_n$  if $\gamma$ evaluates to $\Fa$ then 
    there exists $i_{\ell}$ such that the world $i_{\ell}$ does not survive in $\M_{\varphi}|_{\tr(\ell_1)\dots \tr(\ell_n)}$. This is because 
    there exist at least a clause, say $c_{i_{\ell}}$ in $\gamma$, that evaluates to $\Fa$ when the the literals $\ell$ is assignment $\T$. Consider the 
    The world $i_{\ell}$ in $\M_\varphi$ corresponding to $c_{i_{\ell}}$, will have $Exp(i_{\ell})\backslash w = \delta$. In that case, note that $\M_\varphi|_w, s \nvDash \hat{K_1}p_{i_{\ell}}$.  Thus there is a bijection between the set $\mathcal{T}$ of all satisfying assignments of $\gamma$ and the set $\{w\mid \M_\varphi|_w, s \vDash \hat{K_1}p_{i_{\ell}} \}$. Now from the construction of the formula $\psi$ we see that $\varphi$ is satisfiable iff $\M, s\vDash \psi$.
\end{proof}

\section{Complexity Results for Model checking for $\starfree-\existential$ and
$\word$ fragment of $\POL$}

In this section we prove the complexity of model-checking in the $\starfree-\existential$ and $\word$ fragment of $\POL$. 

\subsection{Complexity for Model checking for $\starfree-\existential$ fragment of $\POL$}

We start with proving that the model-checking problem for the $\starfree-\existential$ is in $\NP$. To prove that the model-checking problem for the $\starfree-\existential$ is in $\NP$ we present an algorithm $\GetSetNP$ and in Lemma~\ref{thm:NP} we prove the correctness and complexity of the algorithm $\mcNP$. 

In Lemma~\ref{thm:NPH} we prove that the model-checking problem in $\NP$-hard. Thus Lemma~\ref{thm:NP} and \ref{thm:NPH} combined gives us 

\starfreeexistNPC*

The algorithm $\mcNP$ calls a non-deterministic subroutine $\GetSetNP$ that takes the model $\M$ and NNF formula $\phi$, which is a $\starfree-\existential$ formula, and returns the set of all worlds $s$ such that $\M,s\vDash \phi$. The subroutine $\GetSetNP$ uses another subroutine $\ResidueByLetter$ that in turn uses a subroutine $\AuxOut$. The goal of the subroutine $\ResidueByLetter$ is to take as input a regular expression $\pi$ and a word $a$ and output the residue $\pi\regdiv a$ in polynomial time. Although the algorithm $\ResidueByLetter$  is straightforward for the sake of completeness we present the pseudo-code formally. The proof of correctness and the complexity of the algorithms (formally stated in Lemma~\ref{lemma:residuebyletter}) follows from standard arguments.

\begin{algorithm}[t]
\caption{$\mcNP$}
\textbf{Input}: $\M = \ldiamondarg{S, R, V, Exp}$, $s\in S$, $\varphi$, where $Exp$ are NFA, and $\varphi$ is a $\starfree-\existential$ Formula\\
\textbf{Output}: Returns $\T$ iff $\M,s\vDash\varphi$
\begin{algorithmic}[1] 
\IF{$s\in \GetSetNP(\M, \phi)$}
\STATE Return $\T$
\ENDIF
\end{algorithmic}
\label{algo:mcSFSTAR}
\end{algorithm}
\vspace{-0.5cm}
\begin{algorithm}[t]
\caption{{$\GetSetNP$}}
\textbf{Input}: $\M = \ldiamondarg{S, R, V, Exp}, \varphi$, where $Exp$ are NFA, and $\varphi$ is a $\starfree-\existential$ Formula\\
\textbf{Output}: Returns set of states $S'\subseteq S$ such that $\M,s\vDash\varphi$ for all $s\in S'$
\begin{algorithmic}[1] 
\IF{$\varphi=p$ is a propositional variable }\label{getsetnp:basecase}
    \STATE $S'=\emptyset$\;
    \FOR{$s\in S$}
        \IF{$p\in V(s)$}
            \STATE $S' = S'\cup \{ s\}$\;
        \ENDIF
    \ENDFOR
    \STATE Output $S'$\;
\ENDIF
\IF{$\varphi = \neg p$, where $p$ is a propositional variable}
	\STATE Output $S\setminus \GetSetNP(\M,\psi)$
\ENDIF
\IF{$\varphi = \psi_1 \vee \psi_2$}
    \STATE Output $\GetSetNP(\M,\psi_1)\cup\GetSetNP(\M,\psi_2)$
\ENDIF
\IF{$\varphi = \psi_1\wedge\psi_2$}
    \STATE Return $\GetSetNP(\M,\psi_1)\cap\GetSetNP(\M,\psi_2)$
\ENDIF
\IF{$\varphi = \ldiamondarg{\pi}\psi$}
    \STATE $\pi' = \pi$\\
    \STATE $\M' = \M$
    \FOR{ $i = 1$ to $|\pi|$}\label{ln:NPguessloop}
        \IF{$\epsilon\in\pi'$ and $s\in S$}\label{ln:piexhaust}
            \STATE return $\GetSetNP(\M',\psi)$
        \ENDIF
        \STATE Guess a letter $a\in\Sigma$\\
        \STATE $\pi' = \pi'\backslash a$ (using $\ResidueByLetter$)\label{line:NPResidue}\\
        \FOR{ each state $s\in S$ }\label{ln:NPmodelupdate}
            \STATE $Exp(s) = Exp(s)\backslash a$ (using $\ResidueByLetter$)\label{line:NPExpResidue}
        \ENDFOR
    \ENDFOR
\ENDIF
\IF{$\varphi = \hat{K_i}\psi$}  	
    \STATE $S' = \GetSetNP(\M,\psi)$
    \STATE Output $\{s\in S\mid \exists t\in S'\mbox{ and }t\sim_i s \}$
\ENDIF
\end{algorithmic}
\label{algo:GetSetNP}
\end{algorithm}


\begin{algorithm}[t]
\caption{{$\ResidueByLetter$}}
\textbf{Input}: Regular expression $\pi$ and a letter $a\in\Sigma$\\
\textbf{Output}: Returns $\pi\backslash a$
\begin{algorithmic}[1]
\IF{$\pi\in \Sigma\cup\{\epsilon,\delta \}$}
    \IF{$\pi = a$}
        \STATE return $\epsilon$\;
    \ELSE
        \STATE return $\delta$\;
    \ENDIF
\ENDIF
\IF{$\pi = \pi_1 + \pi_2$}
    \STATE return \\
    \ \ \ \ \ $\ResidueByLetter(\pi_1, a) + \ResidueByLetter(\pi_2, a)$
\ENDIF
\IF{$\pi = \pi_1.\pi_2$}
    \STATE return $\ResidueByLetter(\pi_1, a).\pi_2$ \\ 
   \ \ \ \ \ \ \ \ \ \ \ \ \ \ \ \ \ \  \ \ \ \ \ \ $+\AuxOut(\pi_1).\ResidueByLetter(\pi_2,a)$
\ENDIF
\IF{$\pi = (\pi_1)^*$}
    \STATE return $\ResidueByLetter(\pi_1, a).(\pi_1)^*$
\ENDIF
\end{algorithmic}
\end{algorithm}

\begin{algorithm}[tb]
\caption{{$\AuxOut$}}
\textbf{Input}: A regular expression $\pi$\\
\textbf{Output}: Returns $\epsilon$ if $\epsilon\in\mathcal{L}(\pi)$ else $\delta$, where $\mathcal{L}(\delta)=\emptyset$\\
\begin{algorithmic}[1]
\STATE Create $A_\pi = <Q_\pi,\Sigma,\delta_\pi,q^\pi_0,F_\pi>$, the NFA for $\pi$\;
\IF{$q^\pi_0\in F_\pi$}
    \STATE return $\epsilon$
\ELSE
    \STATE return $\delta$\;
\ENDIF
\end{algorithmic}
\end{algorithm}

\ 

\

\begin{lemma}
Given a regular expression $\pi$ over $\Sigma$, and $a\in\Sigma$,  $\ResidueByLetter$ returns $\pi\regdiv a$ in polynomial time.
\label{lemma:residuebyletter}
\end{lemma}
\begin{proof}
The algorithm is a recursive one that directly follows the inductive definition~\ref{def:residue} of Residue. Hence the correctness and the complexity of the algorithm follows.
\end{proof}

\begin{lemma}\label{thm:NP}
Given a finite $\POL$ model $\M = \ldiamondarg{S,R,V,Exp}$, an $s\in S$ and a  $\starfree-\existential$ formula $\phi$, the algorithm $\mcNP$ is a polynomial time non-deterministic algorithm that outputs $\T$ iff $\M,s\vDash\phi$.

On other words, the the model-checking problem for the $\starfree-\existential$ fragment of $\POL$ is in $\NP$.
\end{lemma}

\begin{proof}
    We prove the lemma in two parts: first we will prove the correctness of the algorithm $\GetSetNP$ - that is we show that given a finite $\POL$ model $\M = \ldiamondarg{S,R,V,Exp}$, an $s\in S$ and a  $\starfree-\existential$ formula $\phi$, $s\in\GetSetNP(\M, \phi)$ iff $\M,s\vDash\phi$. The correctness of the algorithm  $\mcNP$ follows immediately. 
    
    We then prove the complexity of the algorithm $\mcNP$.
    
    \ 
    
    \noindent\textbf{Proof of Correctness of $\GetSetNP$}  Let us start by proving the correctness of the algorithm $\GetSetNP$.
    This can be proved by induction over the size of $\phi$. 
    
    \textbf{Base Case.} Consider the case where $\phi = p$, where $p\in\BP$. In the IF case in $\ref{getsetnp:basecase}$, the set $S'$ is populated with all the worlds $s\in S$ where $p\in V(s)$. Hence, $\M,s\vDash\phi$ iff $s\in\GetSetNP(\M,\phi)$
    
    \textbf{Induction Hypothesis (IH).} Given a finite $\POL$ model $\M = \ldiamondarg{S,R,V,Exp}$, an $s\in S$ and a  $\starfree-\existential$ formula $\phi$, $s\in\GetSetNP(\M, \phi)$ iff $\M,s\vDash\phi$, where $|\phi|\leq k$, for an integer $k$.
    
    \textbf{Inductive Step.} 
    \begin{itemize}
    \item $\varphi = \neg{\psi}$
    \begin{align*}
        \M,s\vDash\neg{\psi}&\mbox{ iff } \M,s\nvDash\psi\hspace*{12.0cm}\\
        &\mbox{ iff }  s\notin \GetSetNP(\M,\psi)\mbox{, by IH}\\
        &\mbox{ iff }  s\in S\setminus \GetSetNP(\M,\psi)\\
        &\mbox{ iff }  s\in\GetSetNP(\M,\neg{\psi})
    \end{align*}
    %
    \item $\varphi = \psi_1\vee\psi_2$
    \begin{align*}
         \M,s\vDash \psi_1\vee\psi_2 &\mbox{ iff } \M,s\vDash\psi_1\mbox{ or }\M,s\vDash\psi_2\hspace*{12.3cm}\\
        &\mbox{ iff }  s\in\GetSetNP(\M,\psi_1)\\
        &\ \ \ \ \ \ \ \mbox{ or }   s\in\GetSetNP(\M,\psi_2)\mbox{, by IH}\\
        &\mbox{ iff }  s\in\GetSetNP(\M,\psi_1) \\
        &\ \ \ \ \ \ \  \cup\GetSetNP(\M,\psi_2)\\
        &\mbox{ iff }  s\in\GetSetNP(\M,\psi_\vee\psi_2)
    \end{align*}
    %
    \item $\varphi = \psi_1\wedge\psi_2$
    \begin{align*}
         \M,s\vDash \psi_1\wedge\psi_2 &\mbox{ iff }  \M,s\vDash\psi_1\mbox{ and }\M,s\vDash\psi_2 \hspace*{12.3cm}\\
        &\mbox{ iff }  s\in\GetSetNP(\M,\psi_1)\\
        &\ \ \ \ \ \ \  \mbox{and }s\in\GetSetNP(\M,\psi_2)\mbox{, by IH}\\
        &\mbox{ iff } s\in\GetSetNP(\M,\psi_1)\\
        &\ \ \ \ \ \ \ \cap\GetSetNP(\M,\psi_2)\\
        &\mbox{ iff } s\in\GetSetNP(\M,\psi_\wedge\psi_2)
    \end{align*}
    %
    \item $\varphi = \hat{K_i}\psi$
    \begin{align*}
         \M,s\vDash\hat{K_i}\psi & \mbox{ iff } \exists t\sim_i s\mbox{ and }\M,t\vDash\psi\hspace*{13.0cm}\\ 
        &\mbox{ iff } \exists t\sim_i s\mbox{ and }t\in\GetSetNP(\M,\psi)\mbox{, by IH}\\
        &\mbox{ iff } \GetSetNP(\M,\hat{K_i}\psi)
    \end{align*}
    %
    \item $\varphi = \ldiamondarg{\pi}\psi$
    \begin{align*}
        \M,s\vDash\ldiamondarg{\pi}\psi &\mbox{ iff } \exists w\in\LL(\pi): \LL(Exp(s)\regdiv w)\neq\emptyset\hspace*{13.0cm}\\ 
        &\ \ \ \ \ \ \ \mbox{ and }\M|_w,s\vDash\psi\\
        &\mbox{ iff } \exists w: |w|\leq |\pi|\mbox{ and }\LL(Exp(s)\regdiv w)\neq\emptyset\\ 
        &\ \ \ \ \ \ \ \mbox{ and }s\in\GetSetNP(\M|_w,\psi)\\
    \end{align*}
    Since $\pi$ is star-free, hence all words in $\LL(\pi)$ is size at most $|\pi|$. Loop in line~\ref{ln:NPguessloop} guesses $w\in\LL(\pi)$ letter by letter, residues $\pi$ (Line~\ref{line:NPResidue}) and updates model (the for loop starting from Line~\ref{ln:NPmodelupdate}), and recursively calls $\GetSetNP(\M|_w,\psi)$ once $\pi$ is exhausted (condition in Line~\ref{ln:piexhaust}), that is $w\in\LL(\pi)$. The residuation in Line~\ref{line:NPResidue} and the model updation can be done by $\ResidueByLetter$, which is correct by Lemma~\ref{lemma:residuebyletter}. Therefore, $\M,s\vDash\ldiamondarg{\pi}\psi$ iff $s\in\GetSetNP(\M,\ldiaarg{\pi}\psi)$
    %
\end{itemize}
     
     \ 
     
    \noindent\textbf{Complexity of $\GetSetNP$}
    Now for the complexity of the algorithm, $\GetSetNP$  let us prove that the algorithm is a non-deterministic polytime algorithm. $\GetSetNP$ is a recursive algorithm that returns the worlds in $\M$ where $\phi$ holds. The algorithms labels each world $s$ in the $\M$ with $\psi\subseteq\phi$ iff $\psi$ holds in $s$.

    For each case of propositional operators, that is, $\phi = p\mid\ \ \neg{p}\mid\ \ \psi_1\vee\psi_2\mid\ \ \psi_1\wedge\psi_2$, assuming the worlds are labelled by the subformulas of $\phi$, deciding the worlds where $\phi$ holds require linear steps with respect to the worlds in $\M$.
    
    In case of $\phi = \hat{K_i}\psi$, for each world $s$ to be decided whether to label by $\phi$, at worst case at most all the related worlds of $s$ needs to be checked whether at least one of them is labelled by $\psi$. Hence, assuming all the worlds satisfying $\psi$ are labelled, this step takes quadratic steps with respect to the number of worlds in $\M$. 
    
    In the case of $\varphi = \ldiamondarg{\pi}\psi$, $\pi$ is promised to be star-free. Hence, any word $u\in\LL(\pi)$ is such that $|u|\leq|\pi|$. Hence, there exists a sequence of letters $w$ of length at most $|\pi|$, such that $\M|_w,s\vDash\psi$ if and only if $\M,s\vDash\ldiamondarg{\pi}\psi$. By Lemma~\ref{lemma:residuebyletter}, the residuation in Line~\ref{line:NPResidue} and Line~\ref{line:NPExpResidue} takes polynomial time. Also at each step, there are constant number of guesses (a letter from $\Sigma$).\\
    Also, there can be at most $|\phi|$ number of subformulas of $\phi$, each of size at most $|\phi|$.
    
    Hence $\GetSetNP$ is the $\NP$ algorithm for model checking problem where the input formula is promised to be $\starfree-\existential$. Thus the algorithm $\mcNP$ is also a non-deterministic polytime algorithm.

\end{proof}

\begin{lemma}\label{thm:NPH}
The model-checking problem for the $\starfree-\existential$ fragment of $\POL$ is $\NP$-hard.
\end{lemma}

\begin{proof}
We shall prove this by a reduction from 3-SAT. 
    Given a 3-SAT CNF-formula $\varphi$ containing $n$ variables $\{x_1,\dots ,x_n \}$ and $m$ clauses $C = \{c_1,\dots ,c_m \}$, we shall construct our reduction. Consider the translations $\tr(x_i) = a_i$, $\tr(\neg{x_i}) = a'_i$ for all $i\in [n]$. Now we present the model checking instance that we construct from $\varphi$.
    \begin{itemize}
        \item  Alphabet: $\Sigma_\varphi = \cup_{i\in [n]}\{\tr(x_i), \tr(\neg{ x_i})\}$.
        \item Model: $\M_\varphi = \langle S_\varphi,\sim_\varphi,V_\varphi,Exp_\varphi \rangle$, where
        \begin{itemize}
            \item $S_\varphi = \{1,\dots, m \}$.
            \item A single agent $1$ and $\forall i, j\in S_{\varphi}$, $i\sim_1 j$. 
            \item $V_\varphi(j) = \{p_j\}$ for all $j\in S_{\varphi}$. 
            \item For each clause $c_j\in C$, $Exp_\varphi(j) = (\sum_{l\in c_j}(\Sigma_{\varphi}\setminus\{\tr(\ell), \tr(\neg{\ell})\})^*\tr(\ell)(\Sigma_{\varphi}\setminus\{\tr(\ell), \tr(\neg{\ell})\})^*)$, where the sum is over the literals in the clause $c_j$. 
            
        \end{itemize}
        \item Formula: $\psi := \ldiamondarg{a_1 + a'_1}\dots \ldiamondarg{a_n + a'_n}\bigwedge_{i\in [m]}(\hat{K_1}p_i)$.
        \item Starting world: Any $s$ from $S$.
    \end{itemize}
    All we need to show now is that the CNF-formula $\varphi$ is satisfiable iff $\M_\varphi,s\vDash \psi$.

Let us start by assuming $\varphi$ is $\T$. We have to prove that $\M_\varphi,s\vDash\psi$.
Hence there exists an assignment $\sigma = (\ell_1,\ell_2,\ldots,\ell_n)$, where $\ell_i\in\{x_i,\neg{x_i}\}$, such that it evaluates $\varphi$ to $\T$. Now the consider the corresponding word $w = \tr(\ell_1)\tr(\ell_2)\ldots\tr(\ell_n)$. By construction, proving $\M_\varphi|_w,s\vDash\bigwedge_{i\in[m]}(\hat{K_1}p_i)$ proves our claim.

We can consider $\M_\varphi|_w$ to be updated from a series of updates $\M_\varphi|_{\tr(\ell_1)}, \M_\varphi|_{\tr(\ell_1)\tr(\ell_2)},\ldots, \M_\varphi|_{\tr(\ell_1)\tr(\ell_2)\ldots\tr(\ell_n)}$. Let us consider a world $j$ in the model $\M_\varphi$, corresponding to the clause $c_j = (\ell_p\vee \ell_q\vee \ell_r)$, where $\ell_p, \ell_q, \ell_r$ are literals in the clause. Note that, if neither of the literals in $c_j$ contains $x_i$, the update of the model with $\tr(x_i)$ or $\tr(\neg{x_i})$ does not affect the world $j$. Now, on the other hand, without loss of generality, since $\sigma$ is a satisfying assignment, consider literal $\ell_p$ is in $\sigma$, since at least one of the literals in $c_j$ has to be in the satisfying assignment. Note that after  updating corresponding to $\tr(\ell_p)$, the $Exp(j)$ will have the residue regular expression $(\Sigma\setminus\{\tr(\ell_p), \tr(\neg{\ell_p})\})^*$ which guarantees the survival of the world $j$ in future updates.

Hence, since $\sigma$ is a satisfying assignment, every clause will have at least one literal in $\sigma$, hence guaranteeing the survival of all the worlds in $\M_\varphi|_w$. Since all the world survives and $s\sim_1 t$, for every $t\in S_\varphi$, hence $\M_\varphi|_w,s\vDash\bigwedge_{j\in[m]}\hat{K_j}p_j$.

Conversely, let $\varphi$ is unsatisfiable. Hence, for any assignment $\sigma = (\ell_1, \ell_2,\ldots, \ell_n)$, there exists at least one clause, say $c_j = (\ell_p\vee \ell_q\vee \ell_r)$, such that $\neg{\ell_p}, \neg{\ell_q}, \neg{\ell_r}$ is in the $\sigma$. Note update in the world $j$ corresponding to clause $c_j$. After the update corresponding to $\tr(\neg{\ell_p})$, the term $(\Sigma\setminus\{\tr(\ell_p),\tr(\neg{\ell_p})\})^*\tr(\ell_p)(\Sigma\setminus\{\tr(\ell_p),\tr(\neg{\ell_p})\})^*$ in $Exp(j)$ becomes $\delta$ (regular expression for the empty regular language). Same occurs for the update corresponding to $\tr(\neg{\ell_q})$ and $\tr(\neg{\ell_r})$. Hence the $Exp(j)$, after all the three updates, becomes $\delta$, due to which the world $j$ does not survive. Hence $\M_\varphi|_w,s\nvDash\bigwedge_{j\in[m]}\hat{K_j}p_j$, where $w = \tr(\ell_1)\tr(\ell_2)\ldots\tr(\ell_n)$. 

\end{proof}

\begin{algorithm}[h]
\caption{$\mcWORDS$}
\textbf{Input}: $\M = \ldiamondarg{S, R, V, Exp}$, $s\in S$, $\varphi$, where $\varphi$ is a $\word$ Formula.\\
\textbf{Output}: Returns $\T$ iff $\M,s\vDash\varphi$
\begin{algorithmic}[1] 
\IF{$s\in \GetSet(\M, \phi)$}
\STATE Return $\T$
\ENDIF
\end{algorithmic}
\label{algo:mcWord}
\end{algorithm}
\vspace{-0.5cm}
\begin{algorithm}[h]
\caption{$\GetSet$}
\textbf{Input}: $\M = \ldiamondarg{S, R, V, Exp}$, $\varphi$, where $\varphi$ is a $\word$ Formula.\\
\textbf{Output}: Returns set of states $S'\subseteq S$ such that $\M,s\vDash\varphi$ for all $s\in S'$
\begin{algorithmic}[1] 
\IF{$\varphi=p$ is a propositional variable }\label{getset:basecase}
    \STATE $S'=\emptyset$\;
    \FOR{$s\in S$}
        \IF{$p\in V(s)$}
            \STATE $S' = S'\cup \{ s\}$\;
        \ENDIF
    \ENDFOR
    \STATE Output $S'$\;
\ENDIF
\IF{$\varphi = \neg\psi$}
	\STATE Output $S\setminus \GetSet(\M,\psi)$
\ENDIF
\IF{$\varphi = \psi_1 \vee \psi_2$}
    \STATE Output $\GetSet(\M,\psi_1)\cup\GetSet(\M,\psi_2)$
\ENDIF
\IF{$\varphi = \ldiamondarg{w}\psi$}
    \STATE $\M' = <S',R',V',Exp'> = \M$\;
    \FOR{$s\in S'$}\label{getset:residuemodelstart}
        \STATE $Exp'(s) = Exp(s)\setminus w$\label{getset:expresidue}
        \IF{$Exp'(s) = \delta$}
            \STATE $S' = S'\setminus \{s\}$
            \STATE $R' = R'\setminus \{\{s,t \}\}$ for every $t\in S$\;
        \ENDIF
    \ENDFOR\label{getset:residuemodelfin}
    Output $\GetSet(\M',\psi)$
\ENDIF
\IF{$\varphi = \hat{K_i}\psi$}  	
    \STATE $S' = \GetSet(\M,\psi)$
    \STATE Output $\{s\in S\mid \exists t\in S'\mbox{ and }t\sim_i s \}$
\ENDIF
\end{algorithmic}
\end{algorithm}

\subsection{Model-checking for the $\word$ fragment of $\POL$}

In this section we present the proof of the following theorem. 

\wordP*

 \begin{proof}
The model checking algorithm for $\POL$, when $\pi$'s are words, can be designed in a similar way as the folklore recursive model checking algorithm for epistemic logic. Only modification is when, checking whether $\M, s\vDash \langle\pi\rangle \phi$ recursively call $\M|_w, s\vDash \phi$ where $\LL(\pi) = \{w\}$.

We use the algorithm $\mcWORDS$ for model checking the $\word$ fragment. 
The algorithm $\mcWORDS$ calls the subroutine $\GetSet$ which is a polytime algorithm that takes a model $\M$ and a word-formula $\phi$ and outputs the set of all states $s$ such that $\M,s\vDash \phi$. The correctness of the 
algorithm $\mcWORDS$ follows from the correctness of the algorithm $\GetSet$.
The proof of correctness of the algorithm $\GetSet$ is presented in Lemma~\ref{lem:P}.
\end{proof}

\begin{restatable}{lemma}{wordCorrect}\label{lem:P}
    Given a finite $\POL$ model $\M = \ldiamondarg{S,R,V,Exp}$, an $s\in S$ and a $\word$ formula $\phi$, $s\in\GetSet(\M, \phi)$ iff $\M,s\vDash\phi$.
\end{restatable}
\begin{proof}
    This can be proved by induction over the size of $\phi$. 
    
    \textbf{Base Case.} Consider the case where $\phi = p$, where $p\in\BP$. In the IF case in $\ref{getset:basecase}$, the set $S'$ is populated with all the worlds $s\in S$ where $p\in V(s)$. Hence, $\M,s\vDash\phi$ iff $s\in\GetSet(\M,\phi)$.
    
    \textbf{Induction Hypothesis.} Given a finite $\POL$ model $\M = \ldiamondarg{S,R,V,Exp}$, an $s\in S$ and a $\POL$ formula of the $\word$ fragment $\phi$, $s\in\GetSet(\M, \phi)$ iff $\M,s\vDash\phi$, where $|\phi|\leq k$, for an integer $k$.
    
    \textbf{Inductive Step.} 
    \begin{itemize}
    \item $\varphi = \neg{\psi}$
    \begin{align*}
        \M,s\vDash\neg{\psi}  &\mbox{ iff } \M,s\nvDash\psi\hspace*{13.0cm}\\ 
        &\mbox{ iff } s\notin \GetSet(\M,\psi)\mbox{, by IH}\\
        &\mbox{ iff } s\in S\setminus \GetSet(\M,\psi)\\
        &\mbox{ iff } s\in\GetSet(\M,\neg{\psi})
    \end{align*}
    %
    \item $\varphi = \psi_1\vee\psi_2$
    \begin{align*}
        \M,s\vDash \psi_1\vee\psi_2 &\mbox{ iff } \M,s\vDash\psi_1\mbox{ or }\M,s\vDash\psi_2\hspace*{12.3cm}\\
        &\mbox{ iff } s\in\GetSet(\M,\psi_1)\\
        &\ \ \ \ \ \ \ \mbox{ or }s\in\GetSet(\M,\psi_2)\mbox{, by IH}\\
        &\mbox{ iff } s\in\GetSet(\M,\psi_1)\cup\GetSet(\M,\psi_2)\\
        &\mbox{ iff } s\in\GetSet(\M,\psi_\vee\psi_2)
    \end{align*}
    %
    \item $\varphi = \hat{K_i}\psi$
    \begin{align*}
        \M,s\vDash\hat{K_i}\psi &\mbox{ iff } \exists t\sim_i s\mbox{ and }\M,t\vDash\psi\hspace*{13.0cm}\\ 
        &\mbox{ iff } \exists t\sim_i s\mbox{ and }t\in\GetSet(\M,\psi)\mbox{, by IH}\\
        &\mbox{ iff } \GetSet(\M,\hat{K_i}\psi)
    \end{align*}
    %
    \item $\varphi = \ldiaarg{w}\psi$\\
    Since $w$ is a word, $Exp(s)\backslash w$ is calculated in line \ref{getset:expresidue}, and hence for all $s\in S$ in the loop in \ref{getset:residuemodelstart}-\ref{getset:residuemodelfin}. Also a certain state $t(\in S)\notin S'$ iff $Exp(t)\backslash w = \delta$, that is, $\mathcal{L}(Exp(t)\backslash w) = \emptyset$. By Lemma~\ref{lemma:residuebyletter}, we have the correctness of the residuation. Hence after the termination of loop \ref{getset:residuemodelstart}-\ref{getset:residuemodelfin}, $\M' = \M|_w$.
    \begin{align*}
        \M,s\vDash\ldiamondarg{w}\psi &\mbox{ iff } \mathcal{L}(Exp(s)\backslash w)\neq\emptyset\hspace*{12cm}\\ 
        &\ \ \ \ \ \ \mbox{ and }\M|_w,s\vDash\psi\mbox{, since $w$ is word}\\
        &\mbox{ iff } \mathcal{L}(Exp(s)\backslash w)\neq\emptyset\\
        &\ \ \ \ \ \ \mbox{ and }s\in\GetSet(\M',\psi)\mbox{, by IH}\\
        &\mbox{ iff } s\in\GetSet(\M,\ldiamondarg{w}\psi)
    \end{align*}
\end{itemize}
\end{proof}


\bibliographystyle{named}
\bibliography{bibliography}